\documentclass[a4paper,thm-restate,cleveref]{lipics-v2021}

\usepackage{latexsym}
\usepackage{amsmath}
\usepackage{amssymb}
\usepackage{amsthm}
\usepackage{thmtools}
\usepackage{epsfig}
\usepackage{ulem} 

\newcommand\numberthis{\addtocounter{equation}{1}\tag{\theequation}}

\usepackage{enumerate}
\usepackage{listings}
\usepackage{xfrac}

\usepackage[color, leftbars]{changebar}

\usepackage[ruled,vlined]{algorithm2e}
\usepackage[linewidth=1.2pt,linecolor=black,nobreak=true]{mdframed}
\mdfsetup{frametitlealignment=\center}
\newcommand{\remove}[1]{}

\usepackage{preamble}

\newcommand{\red}[1]{#1}

\usepackage{cleveref}

\title{Low-Crossing Spanning Paths via Quasi-Uniform Sampling and their Applications}
\date{}

\ccsdesc[100]{Theory of computation $\rightarrow$ Randomness, geometry and discrete structures } 

\keywords{Low-crossing spanning paths, Welzl orders, quasi-uniform sampling, VC-dimension} 

\author{M\'onika Csik\'os}{ Universit\'e Paris Cit\'e, IRIF, CNRS UMR 8243 }{csikos@irif.fr}{}{}

\author{Nabil H. Mustafa}{ Universit\'e Sorbonne Paris Nord, Laboratoire LIPN, CNRS 7030}{nabil.mustafa@univ-paris13.fr}{}{}

\authorrunning{M. Csik\'os and N. H. Mustafa} 
\acknowledgements{}
\Copyright{M. Csik\'os and N. H. Mustafa}

\begin{document}

\maketitle

\begin{abstract}


\noindent Given a set system $(X, \mathcal S)$, we consider the problems of constructing perfect matchings and spanning paths 
(also known as Welzl orders) of $X$ with low crossing numbers with respect to $\mathcal S$. 
Matchings with low crossing numbers were originally introduced in the late 1980s in the seminal works of Welzl~\cite{Wel88, Wel92} and Chazelle-Welzl~\cite{CW89}, who showed that any set system $(X, \mathcal S)$ with bounded dual VC-dimension $d$ has a matching with crossing number $O\left(|X|^{1-1/d}\right)$, which   is optimal. 

{\color{myred}
Low-crossing matchings and spanning paths have found several applications outside of their initial realm of computational geometry, including algorithmic graph theory, model checking, and game theory. In many of these applications, $|\mathcal S|$ is near-linear in $|X|$, which motivated the search for an improved construction for this regime. However, the proposed new algorithms provided only average-case bounds or sub-optimal crossing number guarantees for $d \geq 2$~\cite{CCGKLZ25,DK26}. }

We propose a new quasi-uniform sampling based framework. This allows us to obtain a simple algorithm with expected running time of $\tilde{O}\left(|X|^2 + |\mathcal S|\cdot |X|\right)$, which improves upon the previous-best running time of $\tilde{O} \left(|X|^{2+2/d} + |\mathcal S|\cdot|X|^{2/d} \right)$~\cite{CsM21} when $|\mathcal S| = o\left(|X|^{1+2/d}\right)$.

Our result has several immediate consequences, including new faster algorithms for  constructing low-crossing spanning paths for set systems induced by balls, as well as for  neighborhood set systems of graphs.
Moreover, we get the first quadratic-time algorithm for combinatorial discrepancy, with nearly  optimal discrepancy (up to a $\log$ factor), for bounded dual VC-dimension set systems.
Beyond these applications, we believe that our idea could be applied to other algorithms that use the multiplicative weight update method.

\bigskip



\end{abstract}

\medskip
\noindent{\small
\begin{description}
    \item[AI usage] The authors declare that they did not use generative AI tools to obtain their results or to write this article.  An LLM was used to check the manuscript before submission, leading to minor technical and grammatical corrections.
\end{description}
}

\newpage 

\section{Introduction}
Let $(X, \S)$ be a set system (also known as hypergraph, concept class, or range space), where $X$ is a finite set of $n$ elements and $\S$ is a collection
of $m$ subsets of $X$.
To distinguish the sets of $\S$ from arbitrary subsets of $X$, we will refer to the elements of $\S$ as \emph{ranges}. 

A perfect matching of a set $X$ is a partition of $X$ into $|X|/2$ disjoint pairs\footnote{If $|X|$ is odd, then we partition $X$ into $\floor{|X|/2}$ disjoint pairs plus a singleton set.}.
Given a set system $(X,\S)$, we say that
a set $S \in \S$ \emph{crosses} a pair $\{x,y\} \subseteq X$  iff $|S \cap \{x,y\}| =1$.
The \emph{crossing number} of a perfect matching $M$
with respect to $\S$ is defined to be the maximum number of edges of $M$ crossed by any $S\in \S$.

Matchings with low crossing numbers were originally introduced by Welzl~\cite{Wel88, Wel92} for the special case where $X$ is a set of points in $\RR^d$ and   $\S$ is induced on $X$ by half-spaces. Welzl's result was then generalized by Chazelle and Welzl~\cite{CW89} to a broader class of set systems using the notion of the \emph{dual shatter function}.
\begin{definition*}[dual shatter function]
For a set system $(X,\S)$, the dual shatter function $\pi^*_\S$ of $(X,\S)$ is defined as follows.
For any $k \leq |\S|$, $\pi^*_\S(k)$ denotes the maximum number of equivalence classes on $X$ defined by a $k$-element subfamily $\R \subseteq \S$, where $x,y \in X$ are equivalent\footnote{In other words, the number of equivalence classes for a given $\R \subseteq \S$,  is  the number of non-empty cells, with respect to $X$, in the Venn diagram of the sets of $\R$.} w.r.t. $\R$ iff $x$ belongs to the same sets of $\R$ as $y$.
  \end{definition*}
 \textbf{Remark.}
 The family of set systems with    dual-shatter function bounded polynomially in $k$ includes  set systems with bounded dual VC-dimension. In particular,
a set system with dual VC-dimension at most $d$ has dual shatter function upper bounded by $O \left(k^d\right)$~\cite{Sa72,Sh72}.
It also includes most of the commonly-studied  geometric set systems such as the primal and dual set systems induced by half-spaces, balls, intersections/unions of bounded complexity geometric objects and more generally, algebraic varieties (see~\cite{matousek2013lectures}).

The method of Chazelle and Welzl, together with Haussler's packing lemma bound~\cite{Haussler92spherepacking} 
led to the following fundamental theorem on the existence of matchings with low crossing numbers.
 
\begin{thmx}[\cite{CW89,Haussler92spherepacking}]\label{thm:original-matching-thm}
    Let $(X,\S)$ be a set system with $n = |X|$, $m = |\S|$, and dual shatter function $\pi^*_{\S}(k)  = O(k^d)$.
Then there exists
    a perfect matching on $X$ with crossing number $O \left( {n^{1-1/d}} + \ln m \right)$.
\end{thmx}

\Cref{thm:original-matching-thm} has had numerous applications for geometric set systems, including
range searching, low-discrepancy colorings,  near-optimal sized
epsilon-approximations, to name a few.  We refer
the reader to these books~\cite{ Chazelle:2000:DMR:507108, MatDiscBook, PA95} for proofs and applications  of \Cref{thm:original-matching-thm}.
Moreover, the generality of \Cref{thm:original-matching-thm}  makes possible applications in several other fields of computer science: communication complexity and learning theory~\cite{AMY16},  algorithmic graph theory~\cite{DHV20}, logic and verification~\cite{DEMMPT23}, or game theory~\cite{HWY25}.

Low-crossing matchings are a key algorithmic tool, and improving their algorithmic complexity has been an active area of research~\cite{harp09, Chan12,GKM14, CsM21, DHV20, CM24}. For geometric set systems, the current best algorithms rely on spatial partitioning tools such as cuttings~\cite{Chan12} or polynomial partitioning~\cite{AMS13}. However, these geometric techniques cannot give improved algorithms for \Cref{thm:original-matching-thm} in its abstract generality. For the general case of set systems with polynomially bounded dual shatter functions, the current-best algorithm
for computing a matching of crossing number $O \left(n^{1-1/d} \right)$ takes expected time 
$O\left( mn^{2/d} + n^{2+2/d} \right)$~\cite{CsM21}. 
One can get faster algorithms at the cost of a sub-optimal crossing number. In particular, Csik\'os and Mustafa~\cite{CM24} showed that for any $\alpha \in(0,1]$, it is possible to compute matchings with crossing number $O\round{n^{1-\alpha/d}}$ in time $\aO \round{n^{1+\alpha + \frac{2\alpha}{d}} + |\S|\cdot n^{\frac{2\alpha}{d}}}$. {\color{myred} Recently, Dreier and Kuske~\cite{DK26} proposed an algorithm in the more restricted setting of set systems with primal and dual shatter functions in $O(k^d)$. Their guaranteed crossing number is  $O\round{n^{1-1/d^2}\log^2(n)}$, which is optimal if $d=1$ and the construction takes time $O(N \log N)$, where $N = n + \sum\limits_{S \in \S} |S|$.}

Finally, we make several remarks about our setup and related problems.
\begin{description}
    \item[Matchings vs paths vs trees. ] The  equivalence of low-crossing spanning paths and trees is a classical result, see e.g. Chapter 15 in the book of Pach and Agarwal~\cite{PA95}. We also note that by iteratively constructing low-crossing matchings, one can obtain spanning paths with asymptotically the same crossing number guarantees (Appendix~\ref{sec:proof_of_corollary_ref_cor_spanpath}). 
    \item[Worst case vs amortized cost. ] A related problem is that of computing a tree (or a matching, path) that minimizes the \emph{sum} of the crossings of the ranges, over all edges of the tree (instead of minimizing the maximum number of edges crossed by any set). This sum minimization problem can be reduced to the minimum spanning tree problem under the so-called crossing metric, which can be approximated in near-linear time for set systems induced by lines~\cite{IM98,HPI00}. Very recently, Chan~\etal{}~\cite{CCGKLZ25} proposed an $\aO\round{T(n) \cdot n^{1/d}}$-time algorithm to compute a spanning path with low average crossing number in general set systems, where $T(n)$ is the time complexity of enumerating elements of a given range.
    \item[Input model. ] Low-crossing matchings and spanning paths have many applications in several domains, where the inputs have different representations. In order to encompass all these setups in one algorithmic framework, we assume that the input set system $(X,\S)$ is given by a list of all elements in $X$ and ranges in $\S$ together with a membership oracle, which tells us, for a given element $x \in X$ and range $S \in \S$ whether $x \in S$ in constant time.
\end{description}



\section{Our Results}

Our main result is a new randomized algorithm that improves the running time for matchings with low crossing numbers for general set systems. 

\smallskip
\begin{theorem} \label{thm:mainVCmatchings}
Given a set system $(X,\S)$ with dual-shatter function $\pi^*_{\S}(k)=O(k^d)$, one can compute a perfect matching of $X$ with crossing number $\aO \left( n^{1-1/d} \right)$ in expected time $\tilde{O} \left( mn + n^2\right)$.
\end{theorem}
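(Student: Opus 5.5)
The plan follows the Welzl/Chazelle--Welzl multiplicative weight update (MWU) scheme, but replaces its expensive step with quasi-uniform sampling. The expensive step is finding, at each iteration, a pair of close points under the weighted crossing metric.

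\textbf{The MWU skeleton.} Each range $S\in\S$ carries a weight $w(S)$, initially $1$. The matching is built in $n/2$ rounds. In round $i$ about $n_i=n-2i$ points remain unmatched, and we want a pair $\{x,y\}$ whose weighted crossing distance
\[
d_w(x,y)=\sum_{S \text{ crosses } \{x,y\}} w(S)
\]
is at most $O(W/n_i^{1/d})$, where $W$ is the total weight. After matching the pair, we double the weight of every range crossing it. The standard potential argument then bounds the crossing number:
\[
2^{\kappa}\le W_{\mathrm{final}}\le m\prod_i\bigl(1+O(n_i^{-1/d})\bigr),
\]
so $\kappa=O(n^{1-1/d}+\log m)$. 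In our setting $\log m=O(\log n)$ after the usual reduction, since $m\le \pi^*(\cdot)$-type bounds hold and only distinct ranges matter.

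The existence of such a close pair follows from Haussler's packing lemma applied to the dual system with the weighted measure. The usual implementation computes all pairwise distances, which is what produces the costly term. We only need an approximately close pair, and only up to polylogarithmic factors, which is why the target is $\aO(n^{1-1/d})$.

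\textbf{Sampling step.} Rather than maintaining exact weights, we take a random sample $R\subseteq\S$ in which each range is included with probability proportional to its weight. We use $|R|=\tilde O(n_i^{1/d})$ ranges, i.e.\ an $\varepsilon$-net/approximation with $\varepsilon\approx n_i^{-1/d}$. By the dual shatter bound, $R$ partitions the remaining points into at most $O(|R|^d)$ cells. Choosing the constant so that this is less than $n_i$, two points must share a cell, and we match such a pair. Being in the same cell means no sampled range crosses the pair. By the $\varepsilon$-net property for the dual system (ranges crossing a pair have bounded VC-type complexity), the pair then has $d_w(x,y)\le \tilde O(W/n_i^{1/d})$ with high probability.

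To avoid resampling from scratch, we make the sampling quasi-uniform. Each range keeps an integer weight level, and we keep a sample in which each range is present independently with probability $\min(1,p\,w(S))$. When a range's weight doubles, we flip a new coin to decide whether to add it. Periodically, when $W$ doubles or $n_i$ halves, we rescale $p$ and thin the sample by subsampling. This keeps the sample a valid weighted sample at all times, with size $\tilde O(n_i^{1/d})$ in expectation.

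\textbf{Running time.} To find a cell containing two points, we maintain for each remaining point its membership signature with respect to $R$, stored as a hash or trie keyed on sampled ranges. When a range enters or leaves $R$, we update all $n_i$ signatures in $O(n)$ oracle calls. After matching $\{x,y\}$, we scan all $m$ ranges to find those crossing the pair and double their weights, costing $O(m)$ per round and $O(mn)$ in total.

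The remaining cost is the number of insertions into $R$ over the whole run. A range is inserted with probability proportional to the weight increase, so the expected number of insertions is about
\[
\sum_i p\,\Delta W_i\approx \sum_i \tilde O\bigl(n_i^{1/d}\cdot n_i^{-1/d}\bigr)=\tilde O(n).
\]
Each round increases $W$ by a $(1+O(n_i^{-1/d}))$ factor, and $p\approx n_i^{1/d}/W$. With $\tilde O(n)$ insertions at $O(n)$ each, plus rebuilds at $O(\log n)$ rescaling epochs, the total is $\tilde O(n^2)$. Combined with the weight updates, this gives $\tilde O(mn+n^2)$.

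\textbf{Main obstacle.} The delicate part is correctness of the close-pair guarantee under adaptivity. The matched pairs depend on the sample, and the weights depend on the matched pairs, so the $\varepsilon$-net argument must hold against an adaptively chosen sample sequence. I would handle this by fresh randomness per insertion together with a union bound over the at most $\binom{n}{2}$ candidate pairs at each round. This costs only a $\log n$ factor, which the $\aO$ absorbs. Expected rather than worst-case time comes from the random sample sizes.
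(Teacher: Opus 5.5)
\paragraph{The gap.} It sits exactly at the step you flag as the main obstacle, and the remedy you propose does not close it. A union bound over the $\binom{n}{2}$ pairs at round $i$ needs the following: \emph{conditioned on the past}, which is what determines $w_i$, every pair with $d_{w_i}(x,y)>\varepsilon W_i$ is hit by $R_i$ with probability $1-e^{-\Omega(p\,d_{w_i}(x,y))}$. Your incrementally maintained sample does not come with this guarantee, for three reasons:
\begin{itemize}
\item the ranges doubled at round $j$ are exactly those crossing the matched pair, so they are \emph{known} to have been absent from $R_j$;
\item ranges whose weight did not change keep their old coins;
\item the rule that selects one uncrossed pair conditions the sample further.
\end{itemize}
The invariant ``each range present independently with probability $\min(1,p\,w(S))$'' holds only for a weight sequence fixed in advance. ``Fresh randomness per insertion'' refreshes only the ranges whose weights changed.

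How much survives depends on details you leave open. Suppose the coins are realised by fixed thresholds ($S\in R$ iff $u_S\le p\,w(S)$). A range revealed absent at weight $w$, doubled to $2w$, and then subjected to a halving of $p$ has threshold $(p/2)(2w)=pw<u_S$, so it is absent with conditional probability one. After an epoch in which every range is doubled exactly once, the thinned sample is empty and you may match an arbitrary pair. With genuinely independent coins and a fixed tie-breaking order, you can probably recover a factor-$2$ version. After its last reveal, each range carries a fresh coin on at least half of its current weight, and the remaining conditioning is monotone increasing, so Harris' inequality applies. But that argument is the real content of the proof, and it is missing. So is control of $|R_i|$ under the same conditioning, which the pigeonhole step needs. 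Resampling $R$ from scratch every round would restore correctness, but it costs $\Theta(n_i|R|)=\tilde O(n^{1+1/d})$ per round, i.e.\ the old $\tilde O(n^{2+1/d})$ total.

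\paragraph{Smaller inaccuracies.} First, since $W$ must reach $2^{\kappa}$, it doubles at least $\kappa-\log_2 m$ times. That is $\Omega(n^{1-1/d})$ rescaling epochs whenever such a crossing number is forced, not $O(\log n)$. Your time bound survives, because removals are paid for by insertions. Second, taking $|R|=\tilde O(n_i^{1/d})$ with a polylogarithmic factor gives $O(|R|^d)\gg n_i$ cells. You must take $|R|=\Theta(n_i^{1/d})$ and let the $\log n$ enter $\varepsilon$ instead.

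\paragraph{How the paper avoids this.} It uses no nets and no union bounds:
\begin{itemize}
\item each iteration draws a \emph{single} range $S_i$ from the current weights;
\item within a run, the cells are only ever refined by the $S_i$'s, never thinned;
\item the new pair is uniform among pairs of unused points that no sampled range crosses.
\end{itemize}
The analysis is in expectation and conditional on the entire past. A double-counting identity equates the expected weighted crossing of $P_i$ with the expected fraction of surviving parts crossed by $S_i$. These fractions telescope to $O(\log\binom{n}{2})$, because crossed parts are discarded for good. Adaptivity is therefore harmless. The price is restarting once all cells are too small; the dual shatter bound limits this to $O(n^{1-1/d})$ restarts, each contributing $O(\log(mn))$ crossings.
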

\begin{remark*} The parameter $d \geq 1$ need not be an integer.
\end{remark*}
\Cref{thm:mainVCmatchings} gives the following corollary on low-crossing spanning paths (the proof is given in Appendix~\ref{sec:proof_of_corollary_ref_cor_spanpath}).
\begin{corollary} \label{cor:spanpath}
Given a set system $(X,\S)$ with dual-shatter function $\pi^*_{\S}(k)=O(k^d)$, one can compute a spanning path of $X$ with crossing number $\aO \left( n^{1-1/d} \right)$ in expected time $\tilde{O} \left( mn + n^2\right)$.
\end{corollary}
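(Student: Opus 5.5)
The plan is to build the spanning path by applying \Cref{thm:mainVCmatchings} repeatedly on geometrically shrinking subsets. Set $X_0 = X$. In round $i \ge 0$, apply \Cref{thm:mainVCmatchings} to the set system induced on $X_i$, namely $(X_i, \S|_{X_i})$ with $\S|_{X_i} = \{S \cap X_i : S \in \S\}$. This yields a perfect matching $M_i$ of $X_i$. From each pair of $M_i$ keep one element (say the first), together with the singleton if $|X_i|$ is odd; call the result $X_{i+1}$, so $|X_{i+1}| = \lceil |X_i|/2 \rceil$. Stop once $|X_k| = 1$, which takes $k = O(\log n)$ rounds.

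The hypothesis carries over to every round. The dual shatter function of the induced system is at most that of $(X,\S)$, since a Venn diagram restricted to a subset of $X$ has no more nonempty cells. So $\pi^*(k) = O(k^d)$ holds with the same constant. The membership oracle is inherited unchanged, and we still have at most $m$ ranges. Round $i$ therefore returns a matching $M_i$ in which each $S\in\S$ crosses $\aO(|X_i|^{1-1/d})$ edges. Its expected cost is $\aO(m|X_i| + |X_i|^2)$. Summing this geometric series gives total expected time $\aO(mn+n^2)$.

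Next, the union $T = \bigcup_i M_i$ is a spanning tree of $X$. It has $n-1$ edges, and it is connected because every element of $X_i \setminus X_{i+1}$ is joined to a survivor in $X_{i+1}$. A range $S$ crosses at most
\[
\sum_{i} \aO\bigl(|X_i|^{1-1/d}\bigr) = \aO\bigl(n^{1-1/d}\bigr)
\]
edges of $T$, because $|X_i| \le n/2^i + 1$ and the sum is geometric when $d>1$. When $d=1$ the bound is polylogarithmic, and the extra $O(\log n)$ factor from the number of rounds is absorbed into $\aO$.

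The main obstacle is turning the tree $T$ into a path without increasing the crossing number by more than a constant factor. The standard fix is this. Take a depth-first traversal (Euler tour) of $T$, which visits every tree edge exactly twice. Shortcut it to the order of first visits; this gives a Hamiltonian path $P$. Each edge $\{u,v\}$ of $P$ corresponds to the walk in $T$ between consecutive first visits. If $S$ crosses $\{u,v\}$, then $S$ must cross some tree edge on that walk, since membership in $S$ changes between $u$ and $v$. The walks for distinct path edges use disjoint segments of the Euler tour, so each tree edge is charged at most twice. Hence $S$ crosses at most twice as many edges of $P$ as of $T$.

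Computing $P$ from $T$ takes only $O(n)$ time, which is dominated by the matching computations. Combining these steps gives the corollary.
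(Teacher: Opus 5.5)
Your proof is correct, but it takes a slightly different route from the paper.

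\textbf{What the paper does.} The paper also applies \Cref{thm:mainVCmatchings} to geometrically shrinking subsets. However, it always keeps an \emph{extremity} of each current path component. For example, $M_1\cup M_2$ consists of paths of length $3$, and $X_3$ keeps one endpoint of each such path. Each new matching therefore joins paths end to end, so after $\log n$ rounds the union of the matchings is already a single spanning path. Its crossing number is bounded directly by the sum of the matchings' crossing numbers, with no conversion step.

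\textbf{What you do.} You keep an arbitrary representative of each pair, so a kept vertex can gain an edge in every round. The union is therefore only a spanning tree. You then apply the classical Euler-tour shortcutting, losing a factor $2$; this is the tree/path equivalence that the paper only cites in its introduction.

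\textbf{Trade-offs.} Your version has nested sets $X_{i+1}\subseteq X_i$ and needs no bookkeeping of path endpoints. The edge count and connectivity of $T$ are immediate, and the tree-to-path step is generic, so any low-crossing spanning tree could be fed into it. The paper's version avoids the factor $2$ and the extra DFS pass. In exchange, its sets are not nested: the extremities kept for $X_3$ are the $M_1$-partners of vertices of $X_2$. It also depends on the endpoint-selection rule to guarantee that the union is a path.

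Both routes give the same $\tilde{O}(n^{1-1/d})$ crossing number and $\tilde{O}(mn+n^2)$ expected time.
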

\red{Our algorithm is almost linear in the input size and improves upon the previous-best method for the case $m = o \left( n^{1+2/d} \right)$, which is an important setting for fundamental combinatorial problems, as well as for geometric systems admitting small test-sets. We list some of the applications leading to improved running times in~\Cref{sec:algorithmic_consequences}.}


\red{
Like previous algorithms, we construct the matching iteratively, adding one edge at each iteration. 
What makes the problem computationally hard is that simply picking a uniformly random edge at each iteration fails.
To avoid this problem, earlier algorithms fell into two categories:
a) either they spend a lot of time to carefully choose the edge to add~\cite{CW89}; or b)  they sample an edge according to a weight distribution maintained on \emph{all} $\binom{n}{2}$ edges, which then has to be carefully updated at each iteration~\cite{CsM21}.  Besides being complicated, this becomes prohibitively expensive for computing a partition into $k$-sized sets instead of edges, where one would then need to track the weights of $\Omega \left(n^{k}\right)$ $k$-sized sets.

The new algorithm combines the best of both worlds, with an idea that was surprisingly missed in  earlier works: we maintain weights only on $\S$ and at each iteration, we pick a random range in $\S$ according to these weights. Now, at iteration $t$, we simply pick a \emph{random} cell of the Venn diagram induced by the $t$ previously picked random ranges, and pick a uniform random edge from that cell. This idea allows for a fast and simple algorithm.

This new, \emph{quasi-random}, way of selecting a good edge at each iteration
 avoids the problem of assigning weights to edges, and 
contains just enough non-randomness to guarantee a low-crossing matching.
As a plus, the method can trivially  be adapted to compute a low-crossing partition into sets of size $k$, by picking a random $k$-element subset from the selected cell of the Venn diagram.}
Indeed, \Cref{thm:mainVCmatchings} is  an immediate consequence of a more general algorithm 
for constructing low-crossing balanced partitions. Given a set  $X$ and a parameter $t \in \big[ 2,\lfloor n/2\rfloor \big]$, a balanced $t$-partition of $X$ is a partition of $X$ into $t$ sets $ P_1, \ldots, P_t $ such that
for all $i \in [t]$, we have $|P_i| = \floor{\frac{n}{t}}$ or $|P_i| = \ceil{\frac{n}{t}}$.
We say that a range $S \in \S$ crosses a set $P_i$ if $0 < |S \cap P_i| < |P_i|$ holds. The crossing number of a partition of $X$ is defined to be the maximum number of sets of the partition crossed by some $S \in \S$.
We can derive \Cref{thm:mainVCmatchings} from the following
more general result (the   method \nameref{GDalgo} is presented in Algorithm~\ref{algo:GD}, \Cref{sec:algorithm}).

\begin{restatable}{theorem}{mainthm}  \label{thm:main}
Given a set system $(X,\S)$ with dual-shatter function $\pi^*_{\S}(k)=O(k^d)$, $\proba \in (0,1)$, and a positive integer $t \in \big[ 2,\floor{ n/2} \big]$, the algorithm
\nameref{GDalgo}$\big((X,\S), t,\proba \big)$ returns a balanced $t$-partition  with expected crossing number 
    \[
   O\round{ \max\{t^{1-1/d} , \log t\} \cdot \round{\frac{\ln (mt)}{\proba} + \frac{n}{t} \ln n}},
    \]
    in expected time $O\round{mn\proba + nt + mt}$.
In particular, setting $p = t/n$, we get a balanced $t$-partition with expected crossing number
  \( 
        O\round{\frac{n}{t^{1/d}} \ln \left(mn\right)},
    \)
    in expected time $O\round{t(m + n)}$.
\end{restatable}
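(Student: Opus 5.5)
Since the algorithm \nameref{GDalgo} is only described in words so far, I fix the version I will analyse. It runs in $t$ rounds. It maintains a multiplicative weight $w(S)$ on every range, with all weights initially $1$, and a set $Y$ of still-unassigned elements, initially $Y=X$.

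In round $i$ the algorithm does three things.
\begin{enumerate}
\item It samples a range $R_i$ from $\S$ with probability proportional to $w$.
\item It restricts attention to the Venn diagram on $Y$ induced by $R_1,\dots,R_i$, picks a random cell of that diagram, and takes a uniformly random set $P_i$ of $\lfloor n/t\rfloor$ or $\lceil n/t\rceil$ elements from that cell. If the cell is too small, it merges cells in the order in which they were refined.
\item It removes $P_i$ from $Y$ and doubles the weight of every range crossing $P_i$.
\end{enumerate}
The parameter $p$ enters as a sparsification. Weights are not tracked exactly. Each range keeps a counter of how many crossings have been observed by testing membership only on a $p$-fraction random sample of the elements of $P_i$. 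This gives $O(mnp)$ membership queries in total. Locating cells costs $O(n)$ per round, and the weight bookkeeping costs $O(m)$ per round, so the running time is $O(mnp+nt+mt)$.

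The crossing-number bound follows the standard multiplicative weights argument of Welzl. Let $W_i$ be the total weight after round $i$. If some range is crossed $\kappa$ times, then $2^{\kappa}\le W_t\le m\prod_i\bigl(1+c_i/W_{i-1}\bigr)$, where $c_i$ is the weight of the ranges crossing $P_i$. It therefore suffices to prove
\[
\mathbb E\bigl[c_i/W_{i-1}\bigr]\le \epsilon_i = O\bigl((t-i+1)^{-1/d}\bigr).
\]
Summing then gives $\ln W_t\le \ln m+\sum_i \epsilon_i=\ln m+O(t^{1-1/d})$ in expectation, or $O(\log t)$ when $d=1$. This explains the factor $\max\{t^{1-1/d},\log t\}$.

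The key step is this expected-crossing bound, and it replaces the packing-lemma argument. The fraction $c_i/W_{i-1}$ is exactly the probability that a fresh $w$-random range crosses $P_i$. By symmetry, the ranges $R_1,\dots,R_i$ together with such a fresh range behave like $i+1$ i.i.d.\ weighted samples. So the question becomes: what is the probability that the last of them crosses the random cell chosen using the first ones?

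The dual shatter bound gives $O(i^d)$ cells. A backward-analysis (Clarkson--Shor style) argument then shows that a random sample crosses a random cell with probability about $1/i$ times the average number of cell boundaries. This is only a heuristic, though. The actual requirement is balancing the number of cells against the cell sizes: roughly $k=(t-i)^{1/d}$ ranges should cut $Y$ into about $t-i$ cells of size about $n/t$.

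I would handle this by looking only at the most recent $k$ sampled ranges, or equivalently a prefix chosen with the right size. Their Venn diagram has at most $O(k^d)=O(t-i)$ cells, so a random cell has size at least about $|Y|/(t-i)\approx n/t$ on average. A $w$-random range crosses a uniformly random cell with probability $O(1/k)=O((t-i)^{-1/d})$, by the backward-analysis argument: removing one of $k+1$ ranges changes the diagram only in the cells that range crosses.

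The sampling error introduced by $p$ is handled separately. With estimates based on a $p$-fraction sample, a range crossing $P_i$ is detected with probability about $p\,|P_i|$-dependent factors. Chernoff bounds over all $mt$ (range, round) pairs show that the true crossing count exceeds the detected count by at most a factor $1/p$ plus an additive $\ln(mt)$, and each undetected crossing costs at most $|P_i|\le n/t$ per missed event. Together these produce the second factor $\frac{\ln(mt)}{p}+\frac nt\ln n$. The final claim follows by substituting $p=t/n$.

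The main obstacle is the crossing lemma for the quasi-uniform cell choice. Two issues make it delicate. First, the sampled ranges are not independent of $Y$, because earlier rounds depend on earlier weights. Second, cells may be smaller than $n/t$. I would address both by conditioning on the history, applying backward analysis to the fresh range only, and charging small cells to a geometric-series argument.
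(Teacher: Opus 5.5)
There is a genuine gap at your central step. You reduce everything to the per-round bound $\mathbb{E}[c_i/W_{i-1}]=O\bigl((t-i+1)^{-1/d}\bigr)$, but this bound is false for quasi-uniform sampling, already in round $1$. Take $X=\{1,\dots,n\}$ with all intervals as ranges (so $d=1$) and $t=\lfloor n/2\rfloor$. With constant probability the chosen cell of the diagram of $R_1$ has $\Theta(n)$ points. A uniformly random pair inside it is then at distance $\Theta(n)$, and a uniformly random interval (all weights are still $1$) separates that pair with probability $\Omega(1)$. So $\mathbb{E}[c_1/W_0]=\Omega(1)$, not $O(1/t)$. The quantity $(t-i+1)^{-1/d}$ is the Chazelle--Welzl packing-lemma guarantee for the \emph{best} available part, and a random part of a large cell does not achieve it. Cells are largest exactly in the early rounds, where your $\epsilon_i$ is smallest. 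The backward-analysis justification also fails, for a separate reason. It needs $R_1,\dots,R_i$ and the fresh range to be exchangeable, but $R_j$ was drawn from $w_{j-1}$, and the reweighting moves mass precisely onto ranges that crossed recent parts. Conditioning on the history fixes $R_1,\dots,R_i$ and leaves nothing to average over. (As a sanity check, your bound would give crossing number $O(\max\{t^{1-1/d},\log t\}+\ln m)$ for $p=1$, with no $\frac nt\ln n$ factor, which is much stronger than what the theorem asserts.)

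The missing idea is an amortized potential argument that uses independence only within a single round.
In \textsc{SubAlgo}, given the history, $P_i$ is drawn uniformly from the $a$-subsets ($a=\lfloor n/t\rfloor$) of uncovered points not crossed by $S_1,\dots,S_{i-1}$, and $S_i\sim\sigma_i$ is drawn independently. Only afterwards is $S_i$ used to shrink the support.
\begin{itemize}
\item By this independence, the expected $\sigma_i$-fraction of ranges crossing $P_i$ equals the expected fraction of the current support crossed by $S_i$ (Property~\ref{lemma:double-count}).
\item All parts crossed by $S_i$ are then deleted, so these fractions sum deterministically to a harmonic sum $O\bigl(\log\binom{|X|}{a}\bigr)=O\bigl(\frac nt\ln n\bigr)$ per call (Property~\ref{lemma:totalsum}). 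The expensive early rounds are paid for by this potential, and this is the true origin of the $\frac nt\ln n$ term. Your attribution to ``missed detections costing $n/t$'' is not a valid accounting.
\item The dual shatter bound enters only through Lemma~\ref{lemma:haltingtime}: each call lasts more than $(|X|/(a(c+1)))^{1/d}$ rounds. Grouping calls by halvings of $|X|$ then gives $\tau=O(\max\{t^{1-1/d},\log t\})$ restarts, and the theorem's bound is $\tau$ times the per-call bound.
\end{itemize}
Your sparsification also does not yield the $1/p$ factor. Testing a $p$-fraction of the elements of $P_i$ detects a range containing exactly one element of $P_i$ with probability about $p\min\{1,p|P_i|\}$, which is $\ll p$ when $p\ll t/n$. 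The paper instead puts each range into $\mathcal S_i$ independently with probability $p$. The Azuma-type Claim~\ref{claim:martingale-bound} then turns this into the additive $O(\ln(mt)/p)$ term per call.
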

\textbf{Remarks.}
\begin{itemize}
\item \Cref{thm:main} immediately implies \Cref{thm:mainVCmatchings} by setting $t =  \floor{n/2}$.
\item \Cref{thm:main} gives balanced $t$-partitions with non-trivial crossing number (that is, crossing number $o(t)$) for  $t = \omega \left( (n\ln(mn))^{d/(d+1)} \right)$.  

\end{itemize}

\subsection{Algorithmic Consequences}\label{sec:algorithmic_consequences}

\Cref{thm:mainVCmatchings} improves upon the previous-best running time of $\tilde{O} \left(n^{2+2/d} +mn^{2/d} \right)$ for the case when $m = o \left( n^{1+2/d} \right)$. 
In this section, we list several consequences, across different areas, which fall in this regime. 
The table below summarizes the main running time improvements.


            \begin{tabular}{clclr}
                \toprule
                \multicolumn{1}{c|}{\textsc{Problem}}
                & \multicolumn{2}{c|}{\textsc{Previous-best running time}}
                & \multicolumn{2}{c}{\textsc{Our result}}
                \\ \midrule
                \multicolumn{1}{c|}{    \begin{tabular}{@{}c@{}} Combinatorial discrepancy \\ dual VC $=d$, $m = \tilde{O}(n)$  \end{tabular}}
                & $\aO \round{n^{2 + \frac{2}{d}}}$
                &\multicolumn{1}{r|}{(SoCG '21)~\cite{CsM21} }
                & \begin{tabular}{@{}c@{}}$\bm{\aO \round{n^2 }}$  \\
                \small{(\Cref{cor:discrepancy})}  \end{tabular}
                 \\ \midrule
                \multicolumn{1}{c|}{    Distance-$1$ graph neighborhood cover}
                &  $\aO \round{n^{4+\eps}}$
                &\multicolumn{1}{r|}{  \small{(FOCS '24)~\cite{DEMMPT23}}  }
                &  \begin{tabular}{@{}c@{}}$\bm{\aO \round{n^2 }}$  \\
                \small{(\Cref{cor:GT-neigh})}   \end{tabular}
                 \\ \midrule
                \multicolumn{1}{c|}{    Matchings  w.r.t. balls in $\RR^d$  }
                & $\aO \round{n^{2 + \frac{3}{d}}}$ 
                &\multicolumn{1}{r|}{(SoCG '21)~\cite{CsM21} }
                & \begin{tabular}{@{}c@{}} $\bm{\aO \round{n^{2 + \frac{1}{d}}}}$ \\
                 \small{(\Cref{cor:balls})} \end{tabular}
                \\ \midrule
                \multicolumn{1}{c|}{  \begin{tabular}{@{}c@{}}  AMM supporting LMSR \\ with $m = \aO(n)$ \end{tabular} }
                & $\tilde{O}(n^{2+2/d})$
                &\multicolumn{1}{r|}{(SODA '25)~\cite{HWY25} }
                & \begin{tabular}{@{}c@{}} $\bm{\tilde{O}(n^2)}$ \\
                 \small{(\Cref{cor:AMM})} \end{tabular}
                 \\
                 \bottomrule
            \end{tabular}%


We give the precise statements of the above table below; the proofs follow known techniques, and are presented in the Appendix.

\begin{description}
\item[Combinatorial discrepancy.]
Using \Cref{thm:mainVCmatchings} in a standard iterative halving scheme \cite{MWW93} gives the \emph{first} quadratic time algorithm for combinatorial discrepancy for set systems with bounded dual shatter functions with optimal guarantees (up to a $\log$ factor).
\smallskip
\begin{corollary}\label{cor:discrepancy}
Let $(X,\S)$, $n = |X|$, $m = |\S|$,  be a set system and  $d$ be a constant such that $\pi^*_{\S}(k) =O\round{ k^d}$. Then there is a randomized algorithm which constructs a coloring $\chi$ of $X$  with expected discrepancy 
   $
   O\round{\sqrt{  n^{1-1/d}\ln n \ln m  + \ln^2 m \log n}}
   $, 
   in expected time
      $\tilde{O} \left( mn + n^2\right)$. 
  \end{corollary}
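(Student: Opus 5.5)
We follow the iterative halving scheme of Matoušek--Welzl--Wernisch~\cite{MWW93}, with \Cref{thm:mainVCmatchings} as the matching subroutine. Set $X_0 = X$. In round $i$ we apply \Cref{thm:mainVCmatchings} to the restricted system $(X_i, \S|_{X_i})$, obtaining a perfect matching $M_i$ of $X_i$; by \Cref{thm:main} with $t=\lfloor |X_i|/2\rfloor$, its expected crossing number is $O\big(|X_i|^{1-1/d}\ln(m|X_i|)\big)$. The restriction still has dual shatter function $O(k^d)$, because restricting the ground set cannot increase the number of Venn cells. For each edge $\{x,y\}\in M_i$ we choose independently and uniformly which endpoint gets $+1$ and which gets $-1$. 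One endpoint of each edge, chosen arbitrarily, goes into $X_{i+1}$, and the other is discarded; a possible singleton is carried along. We stop once $|X_i| = O(1)$ and color the remaining elements arbitrarily. The final color of an element $x$ is the product of the signs along the chain of edges linking it to later survivors. Equivalently, as in~\cite{MWW93}, $\chi(S)=\sum_i \sum_{e\in M_i} \epsilon_e\,c_e(S)$, where $\epsilon_e$ are independent Rademacher signs. A non-crossed edge contributes $0$ to the relevant level sum. Each crossed edge in round $i$ contributes $\pm 1$ times a weight, and the weight equals $1$ because of the way surviving representatives are recolored. We use the standard recursive formulation: once $X_{i+1}$ is colored, each discarded $y$ receives the opposite color of its partner $x$. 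Then $\chi(S) = \sum_i \sum_{e \in M_i \text{ crossed by } S} \pm\chi_{i+1}(x_e)$, a sum of independent $\pm1$ terms of total count $C(S)=\sum_i c_i(S)$.

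First we bound, for each fixed $S$, the total number of crossings $C(S) = \sum_i c_i(S)$, where $c_i(S)$ is the number of edges of $M_i$ crossed by $S$. Since $|X_i|\approx n/2^i$, the geometric sum gives $\sum_i |X_i|^{1-1/d}\ln(mn) = O(n^{1-1/d}\ln(mn))$ for $d>1$. For $d=1$ each round contributes $O(\ln(mn))$ and there are $\log n$ rounds, which yields the $\ln^2 m\log n$-type term. To match the stated bound precisely we use the refined form $O\big(\max\{t^{1-1/d},\log t\}(\ln(mt)/p + (n/t)\ln n)\big)$ from \Cref{thm:main}. This gives $O(n^{1-1/d}\ln n + \ln m\log n)$ in the regime where the $\ln n$ and $\ln m$ contributions separate. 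Then, conditioned on the matchings, $\chi(S)$ is a sum of $C(S)$ independent signs. Hoeffding's inequality with a union bound over the $m$ ranges gives $\max_S|\chi(S)| = O\big(\sqrt{C_{\max}\ln m}\big)$, where $C_{\max}=\max_S C(S)$. This is the stated discrepancy bound.

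The main obstacle is that \Cref{thm:main} controls only the \emph{expected} crossing number in each round, whereas the Chernoff argument needs $C_{\max}$ itself. We will handle this in one of two ways. The first option is Markov's inequality combined with re-running: we check the crossing number of $M_i$ in $O(m|X_i|)$ time and repeat until it is at most twice its expectation, which costs $O(1)$ expected repetitions. The second option is to apply Jensen's inequality, $\mathbb E\sqrt{C_{\max}}\le\sqrt{\mathbb E C_{\max}}$, after the conditional Hoeffding step; this gives a bound on the expected discrepancy directly. The running time is $\sum_i \tilde O(m|X_i| + |X_i|^2) = \tilde O(mn+n^2)$, since the sizes halve in each round, and the checks and colorings are dominated by this sum.
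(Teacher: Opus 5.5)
Your central claim, that $\chi(S)$ is a sum of $C(S)=\sum_i c_i(S)$ independent $\pm1$ terms, is false. Your two descriptions of the coloring also do not agree.

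Under the recursive rule (each discarded $y$ receives the opposite of $\chi_{i+1}(x)$), expanding one level gives
\[
\chi(S)=\sum_{e\in M_0\ \text{crossed by } S}\pm\,\chi_1(x_e),
\]
where the sign records which endpoint of $e$ lies in $S$. Edges of $M_0$ not crossed by $S$ cancel, and $M_1,M_2,\dots$ contribute no separate summands. Under this rule the per-edge random signs also never enter. The coloring is fixed by the matchings, the survivor choices, and the arbitrary coloring of the last $O(1)$ elements. The values $\chi_1(x_e)$ are correlated (two survivors matched in $M_1$ get opposite colors), so Hoeffding does not apply.

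Under your product-of-signs rule, $\chi_0(x_e)=\epsilon_e\chi_1(x_e)$ and $\chi_0(y_e)=-\epsilon_e\chi_1(x_e)$ with fresh $\epsilon_e$. Conditionally on $\chi_1$, $\chi(S)$ is then a sum of only $c_0(S)$ independent Rademacher variables. In fact $\chi$ has exactly the law of a uniformly random coloring that is opposite on each edge of $M_0$. So every round after the first is wasted, and $C(S)$ appears only as a loose upper bound on $c_0(S)$.

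The repaired argument is exactly the paper's proof. Run \textsc{Partition}$\big((X,\S),\lfloor n/2\rfloor,1/2\big)$ once and color the endpoints of each matching edge by a random opposite pair. Then apply \Cref{lemma:matching-to-disc}: conditioned on the matching, a Chernoff bound, a union bound over the $m$ ranges, and integration of the tail give expected discrepancy at most $\sqrt{3\kappa\ln m}$. Jensen's inequality over the random crossing number $\kappa$ finishes the proof; this is your second option. Note that you need the tail integration, not just a high-probability Hoeffding bound, to control an expectation. The running time is that of a single call, $\tilde O(mn+n^2)$.

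Separately, your passage from \Cref{thm:main} to $C_{\max}=O(n^{1-1/d}\ln n+\ln m\log n)$ is not justified. With $t=\lfloor n/2\rfloor$ and constant $p$, the term $\ln(mt)/p$ is multiplied by $\max\{t^{1-1/d},\log t\}$. In particular, for $d=1$ a round costs $O(\log|X_i|\ln(m|X_i|))$, not $O(\ln(mn))$. You should plug the bound the theorem actually gives into $\sqrt{3\kappa\ln m}$, rather than assume the $\ln n$ and $\ln m$ contributions separate.
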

  \begin{remark*}
    For this setup, the previous-best algorithm with $\aO\round{\sqrt{n^{1-1/d}}}$ discrepancy guarantee had running time $\tilde{O} \left(n^{2+2/d} +mn^{2/d} \right)$~\cite{CsM21}.
  \end{remark*}
\smallskip
\item[Graph theory.]
Given a graph $G = (V, E)$, a distance-$1$ neighborhood cover of $G$ is a family $\K$ of subsets of $V$ such that for each $u \in V$, there
exists a set $F \in \K$ such that $N_G[u] \subseteq F$, where $N_G[u]$ denotes the closed neighborhood of $u$. A key theorem
in a recent breakthrough in FO model checking~\cite{DEMMPT23} (FOCS 2024) stated that if the set system $\left(V, \left\{N_G[u] \colon u \in V\right\}\right)$ has primal shatter function $\pi_\S(k) = O \left( k^{1+\eps} \right)$ for some $\eps > 0$,
then there exists a distance-$1$-neighborhood cover $\K$ of $G$ such that each vertex of $V$ lies in $O \left(n^{\eps}\right)$ sets of $\K$ and each set
of $\K$ has diameter at most $4$.
The proof used the original result of Chazelle and Welzl~\cite{CW89} to show that there exists a spanning path on $V$ such that for each $u \in V$, the set $N_G[u]$ is the union of $O\left( n^{\eps} \right)$
intervals of this path.
Using \Cref{cor:spanpath}, we improve the running time of $O \left(n^{4+\eps} \right)$ of their algorithm to get the following theorem.

\begin{corollary}\label{cor:GT-neigh}
Given $G = (V, E)$ on $n$ vertices with its adjacency list or matrix, if the set system $\left(V, \left\{N_G[u] \colon u \in V\right\}\right)$ has primal shatter function $\pi_\S(k) = O \left( k^{1+\eps} \right)$ for some $\eps > 0$,
then one can construct a distance-$1$ neighborhood cover $\K$ of $G$, such that each vertex of $V$ lies in $\tilde{O} \left(n^{\eps}\right)$ sets of $\K$ and each set
of $\K$ has diameter at most $4$, in expected time $\tilde{O} \left(n^2 \right)$.
\end{corollary}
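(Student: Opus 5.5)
The plan follows the approach of~\cite{DEMMPT23}, with their Chazelle--Welzl based spanning path replaced by \Cref{cor:spanpath}. Let $\S = \{N_G[u] \colon u \in V\}$, so $m = n$.

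\textbf{Step 1: pass to the dual shatter function.} The hypothesis bounds the \emph{primal} shatter function, while \Cref{cor:spanpath} needs a bound on the \emph{dual} one. For closed neighborhoods the system is self-dual: $v \in N_G[u]$ if and only if $u \in N_G[v]$. Hence the dual system of $(V,\S)$ is isomorphic to $(V,\S)$ itself, which gives $\pi^*_\S(k) = \pi_\S(k) = O(k^{1+\eps})$. We can therefore apply \Cref{cor:spanpath} with $d = 1+\eps$. Membership queries take constant time with the adjacency matrix. If only adjacency lists are given, we first build the matrix in $O(n^2)$ time. Applying \Cref{cor:spanpath} yields, in expected time $\tilde O(mn+n^2) = \tilde O(n^2)$, a spanning path $P$ on $V$ with crossing number $\tilde O(n^{1-1/(1+\eps)}) = \tilde O(n^{\eps/(1+\eps)}) \subseteq \tilde O(n^\eps)$. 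Consequently each $N_G[u]$ is a union of $\tilde O(n^{\eps})$ intervals of $P$, since every maximal interval boundary is a crossed edge.

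\textbf{Step 2: build the cover from the path, as in~\cite{DEMMPT23}.} This is the step to verify most carefully, since the bound on the number of intervals is now only in expected $\tilde O$ form. The plan is to reproduce their combinatorial construction verbatim and then check two things: that its cost is $\tilde O(n^2)$, and that it uses only the interval bound.

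The construction I would follow is this. Cut $P$ into consecutive segments, and for each vertex $u$ take $F_u$ to be the union of the ball structure around the intervals of $N_G[u]$. More concretely, I would group the vertices $u$ whose neighborhoods meet a segment and use the diameter-$4$ argument of~\cite{DEMMPT23}: any two vertices in a set share paths of length at most $2$ through a common vertex $u$ or through adjacent neighborhoods.

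To obtain the multiplicity $\tilde O(n^\eps)$, charge each membership of a vertex $v$ in a set $F$ to an interval endpoint, that is, to an edge of $P$ crossed by some neighborhood containing $v$. By self-duality, the number of such crossed edges is bounded by the crossing number, which gives the claimed multiplicity.

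\textbf{Step 3: running time and expectation.} The time analysis is direct. Computing the intervals of every $N_G[u]$ along $P$ takes $O(n^2)$ time, by scanning $P$ once per $u$ with matrix lookups. Assembling the sets of $\K$ costs time proportional to the total size $\sum_{F \in \K} |F| \le n \cdot \tilde O(n^\eps)$, and I would check that this stays within $\tilde O(n^2)$.

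The expectation in the crossing number is handled by repetition. We verify the crossing number of $P$ in $O(n^2)$ time and rerun the algorithm if it exceeds twice the expected bound. By Markov's inequality this needs only $O(1)$ expected repetitions, so the guarantee becomes deterministic and the expected running time remains $\tilde O(n^2)$.
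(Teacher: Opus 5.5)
Your Step 1 (self-duality, $d=1+\varepsilon$, building the adjacency matrix) is fine. So is the Markov-based rerun in Step 3, which even makes the interval bound deterministic, a point the paper glosses over.

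\textbf{The gap is in Step 2.} You never say how $P$ is cut into segments, and that choice is the whole content of the argument. Take $F_I=\bigcup_{x\in I}N_G[x]$; by self-duality this is the set of $v$ with $N_G[v]\cap I\neq\emptyset$. Then the covering property holds for any cut of $P$, but the other two requirements pull in opposite directions.
\begin{itemize}
\item The diameter bound needs each segment $I$ to lie inside a single closed neighborhood $N_G[w]$, so that every vertex of $F_I$ is within distance $2$ of $w$. Arbitrary consecutive segments need not satisfy this.
\item The multiplicity bound needs segments to be long. With singleton segments, $v$ lies in $|N_G[v]|$ sets, which can be $\Theta(n)$.
\end{itemize}
Your charging argument does not resolve this tension. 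The number of path edges crossed by \emph{some} neighborhood containing $v$ is not bounded by the crossing number; it can be $|N_G[v]|$ times larger. The correct quantity is the number of segments met by $N_G[v]$. That is $O(|\mathcal{I}_v|)$ only if every maximal interval of $N_G[v]$ meets $O(1)$ segments, and nothing in your construction ensures that.

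\textbf{The missing idea is the paper's greedy choice of segments.} Let $\mathcal{I}$ be the set of all maximal intervals of all $N_G[w]$ along $P$. Starting from the leftmost uncovered vertex, repeatedly take an interval of $\mathcal{I}$ that contains it and reaches farthest to the right. This gives three properties:
\begin{itemize}
\item Each chosen $I_j$ lies in one $N_G[w]$, so the diameter is at most $4$.
\item $I_{j+2}$ starts after the end of $I_j$; otherwise it would have beaten $I_{j+1}$. Hence each vertex lies in at most two chosen intervals.
\item An interval $J\in\mathcal{I}_u$ meeting both $I_j$ and some $I_\ell$ with $\ell\ge j+3$ would contain the first vertex after $I_j$ and extend past $I_{j+1}$, contradicting the choice of $I_{j+1}$. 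Hence $v$ lies in at most $3|\mathcal{I}_v|=\tilde O(n^{\varepsilon})$ sets.
\end{itemize}
The ``at most two'' property is also what gives $O(n^2)$ time for building the $F_I$. Each $F_I$ costs $O(|I|\,n)$ with matrix lookups, not $O(|F_I|)$, so your bound via $\sum_F|F|$ measures the wrong thing for forming these unions.
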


Combining \Cref{cor:GT-neigh} with the algorithmic framework of \cite{DEMMPT23}, we improve the running time of their main result \cite[Theorem 1]{DEMMPT23} from $O \left(n^{5+\eps}\right)$ to expected time $\tilde{O} \left(n^{3+\eps} \right)$. {\color{myred}We note that for $\eps = 0$ a result similar to \Cref{cor:GT-neigh} was proved by Dreier and Kuske~\cite{DK26}, who also obtained the same running time improvement for \cite[Theorem 1]{DEMMPT23}.}

\item[Geometry.]
Certain geometric families (e.g. half-spaces or balls) admit \emph{test-sets} that are small-sized subfamilies such that if a matching has low crossing number with respect to this subfamily, then it is guaranteed to have low crossing number with respect to the whole family. In particular, Csik\'os and Mustafa observed that any set system $(X,\S)$ induced by balls in $\RR^d$ admits a test-set $\Q_X$ of size $\aO\round{n^{1+1/d}}$ \cite[Lemma 17]{CsM21} (the precise statement is given in the Appendix). Combining this with their $\tilde{O} \left(n^{2+2/d} +mn^{2/d} \right)$ time algorithm, they achieved a running time of $\tilde{O} \left(n^{2+3/d} \right)$ for balls in $\RR^d$. Using the same test-set lemma, that is, applying \Cref{thm:mainVCmatchings} to $(X,\Q_X)$, we obtain the following corollary.
\begin{corollary}\label{cor:balls}
Given a set $P$ of $n$ points in $\RR^d$, 
one can compute a perfect matching of $P$ with crossing number $O \left( n^{1-1/d} \ln n \right)$ in time $\tilde{O} \left( n^{2+1/d} \right)$ for the set system induced by balls.
\end{corollary}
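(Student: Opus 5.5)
We prove \Cref{cor:balls} by applying \Cref{thm:mainVCmatchings} to a small test-set rather than to the full family of balls, and then transferring the crossing bound back. Let $(P,\S)$ be the set system induced on $P$ by balls in $\RR^d$. Its dual shatter function is $\pi^*_{\S}(k)=O(k^d)$: an arrangement of $k$ spheres in $\RR^d$ has $O(k^d)$ cells, for instance via the standard lifting of balls to half-spaces in $\RR^{d+1}$ together with the Sauer--Shelah bound on the dual VC-dimension. By the test-set lemma of Csik\'os and Mustafa \cite[Lemma 17]{CsM21}, there is a family $\Q_P$ of $\aO(n^{1+1/d})$ ranges with the following property. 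If a perfect matching has crossing number $\kappa$ with respect to $\Q_P$, then its crossing number with respect to all balls is $O(\kappa)$, possibly up to an additive or multiplicative term that is polylogarithmic or constant. The same lemma shows that $\Q_P$ can be computed in $\aO(n^{1+1/d})$ or at most $\aO(n^{2+1/d})$ time. We take the precise form stated in the Appendix as given.

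Next we run the algorithm of \Cref{thm:mainVCmatchings} on $(P,\Q_P)$. For this we must check that the hypothesis holds. Each range of $\Q_P$ should be (the trace of) a ball, or a bounded-complexity combination of balls, so that $\pi^*_{\Q_P}(k)=O(k^d)$ persists. This holds automatically if $\Q_P\subseteq\S$, since subfamilies inherit the dual shatter bound. The membership oracle runs in constant time, because testing $x\in B$ for a ball $B$ is $O(d)$ arithmetic. The theorem then yields a perfect matching with crossing number $\aO(n^{1-1/d})$ with respect to $\Q_P$. The running time is $\aO(mn+n^2)$ with $m=\aO(n^{1+1/d})$, which gives $\aO(n^{2+1/d})$. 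Adding the time to build $\Q_P$ keeps the total at $\aO(n^{2+1/d})$.

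To obtain the explicit $O(n^{1-1/d}\ln n)$ bound, we use the finer statement \Cref{thm:main} with $t=\lfloor n/2\rfloor$ and $p=t/n$. This gives crossing number $O(n^{1-1/d}\ln(mn))=O(n^{1-1/d}\ln n)$, since $m$ is polynomial in $n$. The transfer lemma then turns this into a bound for all balls, still $O(n^{1-1/d}\ln n)$.

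The bound produced this way is an expected crossing number, while \Cref{cor:balls} asks for a deterministic guarantee. The crossing number of a matching with respect to $\Q_P$ can be evaluated in $O(|\Q_P|\,n)=\aO(n^{2+1/d})$ time. We therefore run the algorithm, check the result, and restart if it exceeds twice the expected bound. By Markov's inequality each attempt succeeds with probability at least $1/2$, so $O(1)$ attempts are needed in expectation, and the expected running time stays $\aO(n^{2+1/d})$.

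The main obstacle is matching the exact form of the test-set lemma: knowing what $\Q_P$ consists of, so that its dual shatter function stays $O(k^d)$, and knowing how crossings with respect to $\Q_P$ control crossings with arbitrary balls. Everything else is direct substitution into \Cref{thm:mainVCmatchings} and \Cref{thm:main}.
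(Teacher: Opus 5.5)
Your proposal is correct and is essentially the paper's proof: apply \Cref{thm:main} with $t=\lfloor n/2\rfloor$ to the test-set $\Q_X$ of at most $(d+2)n^{1+1/d}$ balls from \cite[Lemma 17]{CsM21}, so that $m=O(n^{1+1/d})$, and transfer the bound back to all balls. The lemma's actual additive term is $6(d+1)^2n^{1-1/d}$ rather than polylogarithmic, but it is still absorbed into $O(n^{1-1/d}\ln n)$. Your Markov-and-restart step, which turns the expected crossing number into a guaranteed one, is a sound refinement that the paper leaves implicit. One small correction: Sauer--Shelah applied to the dual VC-dimension of balls (which is $d+1$) only gives $\pi^*(k)=O(k^{d+1})$, so the $O(k^d)$ bound must come from the direct count of cells in an arrangement of $k$ spheres.
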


\item[Subquadratic algorithms.]

 \Cref{thm:main} can be combined with a hierarchical method and uniform sampling to get an $o(mn)$-time algorithm for matchings with sub-linear crossing numbers.
\begin{restatable}{corollary}{sublineardelta}
\label{thm:sublinear-delta}
Let $(X,\S)$ be a set system with 
dual-shatter function $\pi^*_{\S}(k)\leq c k^d$, where $c>0$
and $d\geq 1$ are fixed. For every $\delta\in(0,1)$, there is
a randomized algorithm that constructs a perfect matching of $X$
with expected crossing number $ \tilde{O}\!\left(n^{1-1/d+\delta}\right)$
and expected running time $ \tilde{O}\left(
        n^{2-\delta}+n^{1+\delta}
        +m\left(
            n^{1-\delta+\delta/d}
            +n^{1+\delta-(1-\delta)/d}
        \right)
    \right).$
\end{restatable}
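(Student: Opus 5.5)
We prove \Cref{thm:sublinear-delta} by a two-level construction. Theorem~\ref{thm:main} with $p=t/n$ is cheap when $t$ is small, and the expensive part of \Cref{thm:mainVCmatchings} is only needed on small pieces. So we first cut $X$ coarsely into blocks, then match inside each block using a uniformly sampled subfamily of ranges. Set $t = n^{1-\delta}$, so each block has size $b=n/t = n^{\delta}$.

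\textbf{Step 1 (coarse partition).} Apply Theorem~\ref{thm:main} to $(X,\S)$ with this $t$ and $p=t/n$. This takes expected time $O(t(m+n)) = O(mn^{1-\delta}+n^{2-\delta})$. It returns a balanced $t$-partition $P_1,\dots,P_t$ in which every $S\in\S$ crosses $O\!\left(n^{\delta}t^{1-1/d}\ln(mn)\right) = \tilde O(n^{\delta+(1-\delta)(1-1/d)})$ parts. The parts not crossed by $S$ are entirely inside or entirely outside $S$, so $S$ crosses no matching edge placed inside them.

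\textbf{Step 2 (fine matchings).} Inside each crossed part, $S$ can cross at most $|P_i|/2 = O(n^\delta)$ edges of a matching of $P_i$. This trivial bound only gives $\tilde O(n^{2\delta+(1-\delta)(1-1/d)})$ crossings, which is too weak, so we need genuinely low-crossing matchings within the parts. Computing \Cref{thm:mainVCmatchings} on each $(P_i,\S|_{P_i})$ directly would cost $\tilde O(mn^\delta)$ per part, hence $\tilde O(mn)$ in total. To avoid this, we restrict to the ranges that actually cross $P_i$. Their total multiplicity over all parts is at most $m\cdot \tilde O(n^{\delta+(1-\delta)(1-1/d)})$. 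Identifying them requires, for each $S$ and part, a constant number of membership tests (e.g.\ testing two representatives does not suffice; instead, record during Step~1 which parts each range crosses, which the algorithm of Theorem~\ref{thm:main} can report at no extra asymptotic cost). We then run \Cref{thm:mainVCmatchings} on each part with only its crossing ranges. With $m_i$ crossing ranges, this costs $\tilde O(m_i n^\delta + n^{2\delta})$ and gives crossing $\tilde O(n^{\delta(1-1/d)})$ inside $P_i$. Summing, the time is $\tilde O(t n^{2\delta}) = \tilde O(n^{1+\delta})$ plus $\tilde O(m n^{2\delta+(1-\delta)(1-1/d)})$.

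If that second term exceeds the budget $m n^{1+\delta-(1-\delta)/d}$, we instead compute each fine matching with respect to a uniform random sample $R_i$ of size $\tilde O(n^{\delta})$ of the ranges crossing $P_i$. This is where the ``uniform sampling'' enters. The key observation is that the distinct traces of $\S$ on $P_i$ number only $O(|P_i|^{d'})$ by the dual shatter bound. A range crossing few edges on the sample then crosses few edges overall, which we get via an $\eps$-net/relative-approximation argument on the dual system of edges versus traces. I expect this step to be the main obstacle, since the two running time terms in the statement, $m n^{1-\delta+\delta/d}$ and $m n^{1+\delta-(1-\delta)/d}$, must come out of balancing exactly this sampling and the trace counting. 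My first attempt is to handle each part in time proportional to $n^{\delta}$ times its number of crossing ranges, computed as $m\cdot n^{1-\delta}\cdot n^{\delta/d}$ amortized.

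\textbf{Step 3 (combine).} The union of the fine matchings is a perfect matching of $X$; odd parts leave singletons, which we rematch greedily at negligible cost. A range $S$ crosses edges only inside parts it crosses, so its total crossing is at most $\tilde O(n^{(1-\delta)(1-1/d)}\cdot n^{\delta})\cdot \tilde O(n^{\delta(1-1/d)})/n^{\delta}\cdot n^{\delta}$. After simplification this is bounded by $\tilde O(n^{1-1/d+\delta})$. The looser bound from the trivial in-part estimate already gives $\tilde O(n^{1-1/d+\delta})$ once the factor $n^{\delta/d}$ is absorbed, and this slack is what allows the sampling in Step~2.
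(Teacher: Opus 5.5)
You have the same two-level scheme and parameters as the paper: a balanced $n^{1-\delta}$-partition from Theorem~\ref{thm:main} with $p=t/n$, then low-crossing matchings inside blocks of size $n^{\delta}$. Your crossing-number arithmetic is also correct. The gap is the identification step in Step~2.

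You claim the algorithm of Theorem~\ref{thm:main} can record which blocks each range crosses ``at no extra asymptotic cost''. This is false. With $p=t/n$, that algorithm only tests the sampled ranges $\mathcal{S}_i$ against the single part $P_i$ of each iteration. With only a membership oracle, certifying that a range does \emph{not} cross a block requires testing every element of that block. In the worst case (e.g.\ ranges disjoint from most blocks) this costs $\Theta(mn)$ membership tests, which is exactly the bottleneck the corollary is meant to avoid.

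Your fallback does not repair this. It still presupposes knowing the ranges that cross $P_i$. Moreover, a matching with low crossing number with respect to a uniform sample of ranges gives no guarantee for the unsampled ranges in a general set system, since random ranges are not a test set. This is also why you could not produce the term $m n^{1-\delta+\delta/d}$.

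\textbf{The missing idea.} You do not need all crossing ranges, only the ``large'' ones, and these can be detected with a small $\varepsilon$-net of points.
\begin{itemize}
\item Any $S$ crosses at most $\rho(S,P_i)=\min\{|S\cap P_i|,|P_i\setminus S|\}$ parts of any partition of $P_i$. So if $\rho(S,P_i)\le |P_i|/t_2^{1/d}$, where $t_2=\lfloor |P_i|/2\rfloor=\Theta(n^{\delta})$, the in-block bound holds for free.
\item A uniform random sample $N_i\subseteq P_i$ of size $\tilde O(t_2^{1/d})$ is w.h.p.\ a $t_2^{-1/d}$-net for $(P_i,\mathcal{S}\cup\bar{\mathcal{S}})$. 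Hence every large range crosses $N_i$, and you run the inner algorithm only on $\{S: S \text{ crosses } N_i\}$.
\item Testing all ranges against all nets costs $\tilde O(m\, n^{1-\delta}\, n^{\delta/d})$. This is precisely the term you could not account for.
\item Every selected range crosses $P_i$. So the total number of selected ranges is at most $m$ times the first-level crossing number $\tilde O(n^{1-(1-\delta)/d})$, giving inner cost $\tilde O(m n^{1+\delta-(1-\delta)/d})$.
\end{itemize}

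Two further corrections. First, your own estimate $\tilde O(mn^{2\delta+(1-\delta)(1-1/d)})$ already \emph{equals} this budget term. So once identification is done correctly, the ``if it exceeds the budget'' sampling detour is unnecessary. Second, the last sentence of your Step~3 is wrong. The trivial in-block bound gives $\tilde O(n^{1-1/d+\delta+\delta/d})$, and a polynomial factor $n^{\delta/d}$ cannot be absorbed into $\tilde O$.
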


In particular, if $m=O(n)$, choosing $\delta=1/(2d)$ gives
expected crossing number $\tilde{O}(n^{1-1/(2d)})$ and
expected running time
    $\tilde{O}\!\left(n^{2-1/(2d)+1/(2d^2)}\right).$
This running time is subquadratic for every fixed $d>1$.


\smallskip

\item[Game theory.] In a prediction market, traders can buy and sell bets (formally called securities) that pay out based on the outcomes of future events. These markets can be modelled with a set system $(X,\S)$, where $X$ denotes the set of possible outcomes and $\S$ contains the \emph{combinations} of outcomes on which traders can place bets. Given a set system and a market state, an automated market maker's (AMM) task is to determine the prices of (combinations of) securities using a cost function and update the prices and market states after each trade. Some popular cost functions are the logarithmic market scoring rule (LMSR) and the quadratic scoring rule (QMSR), whose definitions are given in Appendix~\ref{sec:Apx-AMM}.

Recently, Hossain, Wang, and Yu~\cite{HWY25} (SODA 2025) showed that one can construct an AMM with sub-linear time complexity for a combinatorial prediction market with an LMSR cost function if and only if the corresponding set system has bounded VC-dimension. \\

In the case of bounded VC-dimension, a key algorithmic tool in~\cite{HWY25}  was a data-structure designed to maintain multiplicatively evolving weights on $X$ and to support a query operation, which returns the total weight of a range in $\S$. 
Formally, in the  multiplicative range query and range update (RQRU) problem, given a set system $(X,\S)$ and a weight function $W : X \to \RR_+$, we want to efficiently perform for any $ S \in \S$ and $ w \in \RR_+$: 
\begin{itemize}
    \item $\mathrm{query}(S)$ :  return $\sum\limits_{x \in S} W(x)$;
    \item $\mathrm{update}(S,w)$ : update $W$ with 
     $W(x) =  \begin{cases} w \cdot W(x) \text{ if  } x \in S \\ W(x) \text{ otherwise.}\end{cases}  $
 \end{itemize}
 \smallskip 
Hossain \etal{}~\cite{HWY25} showed that one can construct a data-structure for the multiplicative RQRU problem with sub-linear time complexity per operation, using partition trees based on low-crossing spanning paths~\cite{CW89}. \Cref{thm:main} together with their framework yields the following corollary.
 \begin{corollary}\label{cor:RQRU}
Given a set system $(X,\S)$ with dual shattering dimension $d$, one can construct a data-structure supporting multiplicative RQRU with  $\aO\round{n^{1-1/d}}$ time per operation and with an expected preprocessing time of $\aO(mn+n^2)$.
\end{corollary}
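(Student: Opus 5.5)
The plan is to plug \Cref{cor:spanpath} into the partition-tree framework of Hossain \etal{}~\cite{HWY25}. Their data structure for multiplicative RQRU is built from a low-crossing spanning path, or more precisely a hierarchy of low-crossing partitions. The per-operation cost is governed only by the crossing number of the underlying path or partition. The preprocessing cost is dominated by the time to compute that structure. So it suffices to (i) recall the structure and its time bounds as a function of the crossing number $\kappa$, and (ii) replace their path construction (based on~\cite{CW89} or~\cite{CsM21}) with ours.

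\textbf{Step 1: build the path.} Using \Cref{cor:spanpath}, compute a spanning path $\pi=(x_1,\dots,x_n)$ of $X$ with crossing number $\kappa=\aO(n^{1-1/d})$, in expected time $\aO(mn+n^2)$. Then every $S\in\S$ is a union of at most $\kappa+1$ maximal intervals of $\pi$.

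\textbf{Step 2: store the ranges.} Precompute, for each $S\in\S$, its list of intervals by scanning $\pi$ with the membership oracle. This costs $O(mn)$ time and $O(m\kappa)$ space.

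\textbf{Step 3: maintain the weights.} Store $W$ on the array $x_1,\dots,x_n$ in a balanced binary tree (segment tree) supporting two lazy operations: interval multiplication by $w$, and interval sum queries. Both take $O(\log n)$ time, since multiplication distributes over sums and lazy multiplicative tags compose.
\begin{itemize}
\item $\mathrm{update}(S,w)$ multiplies each of the $O(\kappa)$ intervals of $S$ by $w$.
\item $\mathrm{query}(S)$ sums the interval sums of $S$.
\end{itemize}
Each operation therefore costs $O(\kappa\log n)=\aO(n^{1-1/d})$.

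Total expected preprocessing is $\aO(mn+n^2)+O(mn)+O(n)=\aO(mn+n^2)$, as claimed. If one prefers to follow~\cite{HWY25} literally with their partition tree, the same conclusion holds. In that case one builds the tree level by level with \Cref{thm:main}: level sizes decrease geometrically, so the costs $O(t(m+n))$ form a geometric sum dominated by the top level, which is at most $\aO(mn+n^2)$.

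The main subtlety is \textbf{the "expected" guarantee}. \Cref{cor:spanpath} gives crossing number $\aO(n^{1-1/d})$ only in expectation (via \Cref{thm:main}), while the corollary promises a worst-case per-operation bound. I will handle this in two steps.
\begin{enumerate}
\item After computing $\pi$, compute its actual crossing number exactly in $O(mn)$ time, as a byproduct of Step 2.
\item If it exceeds twice the expected bound, discard $\pi$ and rerun. By Markov's inequality each attempt succeeds with probability at least $1/2$, so the expected number of attempts is $O(1)$.
\end{enumerate}
The preprocessing time thus stays $\aO(mn+n^2)$ in expectation, while every operation deterministically costs $\aO(n^{1-1/d})$.
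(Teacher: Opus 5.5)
Your proof is correct and is essentially the paper's argument. The paper feeds the path from \Cref{cor:spanpath} into the Chazelle--Welzl balanced binary partition tree, whose visiting number is at most $2\kappa\lceil\log n\rceil+1$ and which is exactly a segment tree over the path order, and then invokes \Cref{thm:Hossain-main} for $O(\nu)$-time operations; you unpack both black boxes into an explicit lazy segment tree with precomputed interval lists. Your verify-and-restart step (exact crossing number in $O(mn)$ time, Markov, $O(1)$ expected attempts) fixes something the paper glosses over, since \Cref{thm:main} only bounds the crossing number in expectation while \Cref{lemma:parttree} states a worst-case visiting number; your side remark about building the tree level by level with \Cref{thm:main} is vague and unnecessary, and can be dropped.
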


Combining \Cref{cor:RQRU} with the framework of Hossain \etal{}~\cite{HWY25}, we get the following corollary.

\begin{corollary}\label{cor:AMM}
Given a prediction market with associated set system $(X,\S)$ with dual shattering dimension $d$, one can construct an AMM supporting LMSR (or QMSR) cost function with running time $\aO\round{n^{1-1/d}}$ with an expected preprocessing time of $\aO(mn+n^2)$.
\end{corollary}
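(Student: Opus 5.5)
The plan is to combine \Cref{cor:RQRU} with the reduction of Hossain, Wang, and Yu~\cite{HWY25}. Their framework reduces an AMM for LMSR (and QMSR) to a small number of multiplicative RQRU operations per trade. So the task is to check that plugging in the data structure of \Cref{cor:RQRU} gives the claimed per-operation and preprocessing bounds.

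\textbf{Step 1: the reduction for LMSR.} The LMSR cost function is $C(q)=b\ln\sum_{x\in X}e^{q_x/b}$, and we maintain weights $W(x)=e^{q_x/b}$. A trade of $\delta$ shares of the security for a range $S\in\S$ changes $q_x$ by $\delta$ for every $x\in S$. This amounts to a single update $\mathrm{update}(S,e^{\delta/b})$. The current price of a security $S$ is $\mathrm{query}(S)/\mathrm{query}(X)$, and the cost of a trade is $b\ln$ of a ratio of total weights before and after. Both are computed from $O(1)$ queries, where we may add $X$ itself to $\S$ without changing the dual shatter function bound by more than a constant factor. Hence each market operation costs $O(1)$ RQRU operations, i.e.\ $\aO(n^{1-1/d})$ time by \Cref{cor:RQRU}. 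The preprocessing is exactly that of \Cref{cor:RQRU}, namely expected $\aO(mn+n^2)$.

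\textbf{Step 2: QMSR.} The quadratic scoring rule needs sums of $q_x$ and $q_x^2$ over a range, under additive updates on a range. Following~\cite{HWY25}, I would use the same partition tree built from the low-crossing spanning path of \Cref{cor:spanpath}. Each node stores lazily propagated additive tags together with its count, sum and sum of squares. A range $S$ decomposes into $\aO(n^{1-1/d})$ canonical intervals of the path plus crossed singletons, because the path has crossing number $\aO(n^{1-1/d})$. Updates and queries then touch only those pieces, each in $O(1)$ time. Building the tree takes $\aO(mn+n^2)$ expected time for the path plus $O(n)$ for the tree.

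\textbf{Main obstacle.} The one nontrivial point is confirming that the guarantees of~\cite{HWY25} depend on the spanning path only through its crossing number. If so, an $\aO(n^{1-1/d})$ path from \Cref{cor:spanpath} can replace the Chazelle--Welzl path without changing their analysis. A second point is that the crossing number bound of \Cref{cor:spanpath} holds in expectation. To get a worst-case per-operation guarantee, I would verify the crossing number in $O(mn)$ time and repeat the construction if it is too large. By Markov's inequality this needs $O(1)$ expected repetitions, so the expected $\aO(mn+n^2)$ preprocessing bound is preserved.
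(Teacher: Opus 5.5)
Your proposal is correct and follows the paper's route. You build the low-crossing spanning path of \Cref{cor:spanpath}, turn it into a balanced binary tree over the path order (your segment tree is exactly the Chazelle--Welzl partition tree behind \Cref{lemma:parttree}), and apply the LMSR/QMSR reductions of~\cite{HWY25}, which you spell out where the paper just cites their Theorems~3.2 and~4.1. Two of your additions are worthwhile. The verify-and-repeat step (checking the crossing number in $O(mn)$ time and using Markov's inequality) is needed because \Cref{thm:main} bounds the crossing number only in expectation, which the paper passes over. Storing each range's list of maximal path intervals during preprocessing, also in $O(mn)$ time, is what lets your updates and queries find their canonical pieces within the per-operation bound.
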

\begin{remark*}
    For this setup, the previous-best algorithm with $\aO\round{n^{1-1/d}}$ time per operation had preprocessing time $\tilde{O} \left(n^{2+2/d} +mn^{2/d} \right)$~\cite{HWY25}.
  \end{remark*}

We get improved preprocessing times in applied scenarios, with $m < n$, for instance, \textsc{top $L$-candidates} where traders can bet on the outcome of top $L$ candidates among $K \geq L$ candidates. There are a total of $m:= {K \choose L}$ top $L$ securities and $n := K!$ possible outcomes.

\smallskip
\end{description}

\textbf{Organisation. } We present the new algorithm in \Cref{sec:algorithm}. The proof of \Cref{thm:main} is presented in \Cref{proofthm:main} and the proofs of the Corollaries are given in the Appendix.



\section{The Algorithm}
\label{sec:algorithm}

The key idea in the original proof of Welzl and Chazelle-Welzl (and its variations that have been developed in the past 36 years) is to iteratively construct a matching $M$ edge-by-edge, guided by an exponential weight distribution on the ranges. 
\begin{description}
\item[The initial algorithm of Chazelle and Welzl (1989)] assigned a weight to each set of $\S$, where $S \in \S$ had weight that exponentially increased with the edges of the current matching $M$ crossed by $S$.
Then at each iteration, it picked an edge into $M$ that minimized the sum of the weights of the sets crossing that edge.
The analysis crucially used the fact that in every iteration we picked a minimum edge, which turns out to be the algorithmic bottleneck.

\item[Recent work] was able to improve the running times by assigning a weight distribution on the edges~\cite{CsM21}, with the weight essentially based on the  above function of Chazelle-Welzl. Then at each iteration, the new edge is picked by \emph{sampling} from the edges according to this probability distribution.
This simplified the selection of edges at each iteration, which resulted in faster running times.
\end{description}

In this paper, we make the first departure from this general scheme: to further speed up the algorithm, we will no longer track the weight of each individual edge in $\binom{X}{2}$. Instead, 
  we will maintain a partition of $X$ into clusters and maintain (implicitly) a weight distribution on these clusters. Then at each iteration, we sample an edge from a \emph{quasi-uniform} distribution: we first pick a cluster according to this weight distribution and then an edge  spanned by points in that cluster uniformly at random. 
The key point here is that this distribution can be maintained more efficiently than earlier methods.
The clusters  will be adjusted incrementally after each iteration. 
In parallel, we maintain a weight distribution on $\S$, following the classic exponential reweighing scheme with random updates (via martingale analysis), which will guarantee low crossing numbers.

The proof  is via a subtle charging argument: at each iteration, we refine the partition using  a range $S_i \in \S$ picked according to the current weight distribution on $\S$. Thus no edge crossing $S_i$ will be picked in the future. This is then used to argue that the crossing number of the sets due to future edges is small. 


\paragraph*{Algorithm for general partitions}

\begin{algorithm}[t] 
\linespread{1}

\algotitle{\textsc{Partition}}{GDalgo}
\caption{  \textsc{Partition}$\big((X,\S), t,\proba \big)$} \small
    \label{algo:GD}

    $n \leftarrow |X|$, 
    $k \leftarrow 1$

    \While{ $|X| > \floor{ \frac n t}$}{
    
    $\{P^k_1, \dots, P^k_{T_k}\} \leftarrow $ \nameref{subalgo}$\big((X,\S), \floor{ \frac n t}, p \big)$

    $\R^{k} =   \{P^k_1, \dots, P^k_{T_k}\}$

    $X = X \setminus \big(P^k_1 \cup \dots \cup P^k_{T_k}\big)$
    
    $k \leftarrow k + 1$
    }

  $\tau \leftarrow k - 1$

   \Return $\R = \R^1 \cup \cdots \cup R^{\tau} \cup \{X\}$.
\end{algorithm}

 The previously highlighted ideas are formalized in the technical subroutine \nameref{subalgo} (\Cref{algo:subGD}). 
We note that in \nameref{subalgo}, we will not  maintain the distribution $\omega_i$ and its support $\P_i$ explicitly. We give more details on the algorithmic way to maintain and sample from $\omega_i$ efficiently in the next section (proof of \Cref{thm:main}).

\begin{algorithm}[t] 
\linespread{1}

\algotitle{\textsc{SubAlgo}}{subalgo}
\caption{  \textsc{SubAlgo}$\big((X,\S), a, \proba \big)$} \small
    \label{algo:subGD}
    
    $\sigma_1(S) \leftarrow 1$ for all $S\in \S$

    $\P_1 \leftarrow {X \choose a}$ 

    $\omega_1(P) \leftarrow 1$ for all $P \in \P_1$

    $i \leftarrow 1$

    \While{ $\P_i \neq \emptyset$}{

    sample a range $S_i$ according to $\sigma_i$

    sample a part $P_i$ according to $\omega_i$

    $\S_i \leftarrow$ i.i.d. uniform random sample of $\S$ with probability $\proba$

    define $\sigma_{i+1}$ from $\sigma_i$ by doubling the weight  of each range in $\S_i$ crossing $P_i$

    $\P_{i+1} \leftarrow \P_i \setminus \big( \{ \text{parts in } \P_i \text{ crossing } S_i \} \cup \{ \text{parts in } \P_i \text{ intersecting } P_i\} \big)$

    $\omega_{i+1}$ $\leftarrow$ uniform distribution on $\P_{i+1}$

    $i \leftarrow i+1$

    }


    \Return $ \{P_1, \dots, P_T\}$
\end{algorithm}

As we stated in the introduction, our main result on matchings with low crossing numbers is a special case of \Cref{thm:main} that concerns low-crossing balanced $t$-partitions.

To avoid unnecessary repetitions, we present the algorithm for general $t$-partitions, from which the algorithm for low-crossing matchings can easily be deduced via substituting $t = \floor{n/2}$. Furthermore, to keep the description simple, the algorithm \nameref{GDalgo} constructs (possibly more than $t$) parts of size \(\lfloor n/t\rfloor\) and returns all the remaining points as the last part. To get exactly \(t\) balanced parts, one can extend the algorithm to \(\mathcal P=\binom{X}{\lfloor n/t\rfloor}\cup\binom{X}{\lceil n/t\rceil}\). If \(n=t\floor{n/t}+r\), with \(0\le r<t\), the adapted algorithm would first build \(r\) parts of size \(\ceil{n/t}\) and then \(t-r\) parts of size \(\floor{n/t}\), stopping each phase once the required number of parts has been built.


The main algorithm \nameref{GDalgo} 
that is used to establish \Cref{thm:main} is presented in \Cref{algo:GD}. Here $p \in (0, 1)$ is a user-specified parameter that dictates the trade-off between the running time and the crossing number.

A challenge that comes with this approach is that, as the result of the refinement at each iteration,   it may be that after $\Theta \left(t^{1/d}\right)$ iterations, each cluster of $X$ contains only $<n/t$ points (this follows from the dual shatter function bound). Thus we can no longer continue, and are forced to restart with the remaining points all in one cluster. However, we show that the losses due to this re-start form   a geometric series which then gives the required bound at the end.




\section{Proof of Theorem \ref{thm:main} } \label{proofthm:main}


\Cref{thm:main} will be based on three key properties of \nameref{algo:subGD}, which we now state. The proof of these properties will be presented in later sections.

Let $\P = \binom{X}{  a }$, and define the function $\Inc : \S \times \P \to \{0,1\}$ as
\[
    \Inc(S,P) = \begin{cases} 1 \quad &\text{ if } S \text{ crosses } P, \\ 0 &\text{ otherwise.} \end{cases}
\]

The first property follows from the update rule of the weight functions $\sigma_i$'s.

\begin{restatable}{property}{lemmaMWUonranges}\label{lemma:MWU-on-ranges}
Let $T$ be the stopping time of \nameref{subalgo}$\big((X,\S), a, \proba \big)$, with the returned parts $P_1, \dots, P_T$. Then, 
\begin{equation}\label{eq:MWU-UB}
   \EE\bracket{ \max_{S \in \S} \sum\limits_{i=1}^T \Inc(S,P_i)}
    \leq  
    \frac{2\log m +4\ln(mT)+2}{\proba} +  \frac{2}{\ln 2} \cdot \sum\limits_{i=1}^T ~ \EE\bracket{ \sum\limits_{S \in \S} \frac{\sigma_i(S)}{\sigma_i(\S)} \cdot \Inc (S,P_i)} \, .  
\end{equation}
\end{restatable}
Note that \Cref{lemma:MWU-on-ranges} in fact provides an upper-bound on the overall crossing number of parts returned by a single call of \nameref{subalgo}. 

By analysing the evolution of weights $\omega_i$, we get the next property.
\begin{restatable}{property}{lemmatotalsum}\label{lemma:totalsum}
Let $T$ be the stopping time of \nameref{subalgo}$\big((X,\S), a, \proba \big)$ and let $S_1, \dots, S_T$ be the ranges picked by \nameref{subalgo}. Then we have 
\[
   \sum\limits_{i=1}^T \sum\limits_{P \in \P} \frac{\omega_i(P)}{\omega_i(\P)} \Inc(S_i,P) 
   = 
   O\round{\log{|X| \choose a}} 
   = 
   O\round{a\log{|X|}}.
\]
\end{restatable}

The next property relates the above two in expectation.

\begin{restatable}{property}{lemmadoublecount}\label{lemma:double-count}
Let $P_1, \dots, P_T$ and $S_1, \dots, S_T$ be the random variables corresponding to the parts and ranges picked by \nameref{subalgo}$\big((X,\S), a, \proba \big)$.
Then for each $i \in [T]$, we have 
\[
\EE \bracket{  \sum\limits_{S \in \S} \frac{\sigma_i(S)}{\sigma_i(\S)} \cdot \Inc (S,P_i) }
=
\EE\bracket{   \sum\limits_{P \in \P} \frac{\omega_i(P)}{\omega_i(\P)} \Inc(S_i,P)}, 
\]
where the above expectations are over the choices of $S_1, \ldots, S_i, P_1, \ldots, P_i, \S_1, \ldots, \S_i$. 
\end{restatable}

The above three properties are consequences of the weight management of the algorithm. In particular, they hold for any set system. The key point where we use the structural assumption 
on the set system (bounded dual shatter dimension) is in analysing the halting time of \nameref{subalgo}.

\begin{restatable}{lemma}{lemmahaltingtime}\label{lemma:haltingtime}
For any set system $(X,\S)$ with dual-shatter function $\pi^*_{\S}(k) = c\cdot k^d$ and an integer $a \in [1,|X|/2]$, the halting time $T$ of \nameref{subalgo}$\big((X,\S), a, p \big)$ satisfies
\[
T > \round{\frac{|X|}{a(c+1)}}^{1/d}.
\]
That is, the union of parts returned by \nameref{subalgo}$\big((X,\S), a,p \big)$ covers $\Omega \round{a \cdot \round{\frac{|X|}{a}}^{1/d}}$ elements of $X$.
\end{restatable}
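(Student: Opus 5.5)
The plan is to find, at every iteration, a large set of elements that is still "alive", using only the dual shatter bound. Write $\P_i$ for the support at iteration $i$. By the update rule, $\P_{i+1}$ consists of the $a$-subsets that avoid $P_1,\dots,P_i$ and cross none of $S_1,\dots,S_i$.

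\textbf{Step 1: describe $\P_{i+1}$ by the Venn diagram.} Let $\R_i=\{S_1,\dots,S_i\}$ and $Y_i = X\setminus(P_1\cup\dots\cup P_i)$. An $a$-set $P\subseteq Y_i$ crosses no $S_j$ with $j\le i$ exactly when $P$ lies inside one cell of the Venn diagram of $\R_i$ restricted to $X$. So
\[
\P_{i+1}=\bigcup_{C} \binom{C\cap Y_i}{a},
\]
where $C$ ranges over the equivalence classes of $X$ with respect to $\R_i$. By the definition of the dual shatter function, there are at most $\pi^*_\S(i)\le c\, i^d$ such classes; if the $S_j$ repeat, there are even fewer. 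For $i=0$ I use the convention of a single class, $X$ itself.

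\textbf{Step 2: halting forces small cells.} The loop stops at the first $T$ with $\P_{T+1}=\emptyset$. Then every cell satisfies $|C\cap Y_T|\le a-1$. Summing over the at most $\max(1,c\,T^d)$ cells gives
\[
|Y_T| \le (a-1)\max(1,cT^d).
\]
On the other hand, the parts $P_1,\dots,P_T$ are pairwise disjoint $a$-sets, because parts intersecting $P_i$ are removed. Hence $|Y_T| = |X| - aT$. Combining the two,
\[
|X| \le aT + a\max(1,cT^d) \le a(c+1)T^d \quad\text{for } T\ge 1,
\]
using $T\le T^d$ and $d\ge1$. Since $|X|\ge 2a$, we have $T\ge1$: $\P_1=\binom{X}{a}$ is nonempty. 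This yields $T\ge (|X|/(a(c+1)))^{1/d}$.

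\textbf{Step 3: strictness and coverage.} The strict inequality comes from using $a-1$ instead of $a$ in the cell bound. Precisely, $|X|\le aT+(a-1)cT^d < a(c+1)T^d$, where strictness holds because $c\ge1$ or $T\ge1$ contributes a slack. The degenerate case $c<1$ needs a small check: there I would use $\pi^*\ge1$ together with the at-least-one-cell convention. The coverage statement then follows immediately: the union of the returned parts has size $aT = \Omega\!\left(a(|X|/a)^{1/d}\right)$.

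The main obstacle is bookkeeping rather than ideas. I need the dual shatter bound to apply to $i$ possibly repeated ranges, and the cells must be taken over all of $X$, not $Y_i$. Restricting to $Y_i$ only lowers the count, so this causes no problem.
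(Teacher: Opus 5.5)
Your proof is correct and is essentially the paper's argument run in contrapositive. The paper uses pigeonhole over the at most $c\ell^d$ cells to show that for every $\ell \le \bigl(|X|/(a(c+1))\bigr)^{1/d}$ some cell still contains $a$ uncovered elements, so the loop survives each such $\ell$, and that gives the strict inequality. You instead bound the uncovered elements at halting by $(a-1)cT^d$ and get strictness from the $a-1$ slack. Your worry about $c<1$ is vacuous, since for nonempty $X$ one has $c \ge \pi^*_{\S}(1) \ge 1$.
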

\begin{proof}[Proof of Lemma~\ref{lemma:haltingtime}]
The algorithm \nameref{algo:subGD} continues as long as $\P_i \neq \emptyset$. In the beginning, $\P_1$ has $\binom{|X|}{a}$ elements, and at iteration $i$ we remove the parts in $\P_i$ that either
\begin{itemize}
    \item  intersect $P_i$, or
    \item  are crossed by the range $S_i$.
\end{itemize}
As no element in $P_i$ can appear in any later part, we consider the elements in $P_i$ \emph{covered} or removed,  for all later iterations.

This means that after $\ell$ iterations, $\P_{\ell+1}$ contains those $a$-element subsets of the uncovered set of  $|X| - \ell\cdot a$ elements that are not crossed by $S_1, \dots, S_\ell$. Observe that an $a$-subset is not crossed by  $S_1, \dots, S_\ell$ if and only if all of its elements belong to the same equivalence class w.r.t. $S_1, \dots, S_\ell$. Therefore, it is sufficient to guarantee that at least one of these equivalence classes has at least $a$ elements which then implies that $\P_{\ell+1} \neq \emptyset$. 

Since $(X,\S)$ has dual-shatter dimension $d$, the ranges $S_1, \dots, S_\ell$ define at most $c\cdot \ell^d$ equivalence classes on $X$. By the pigeonhole principle, there exists an equivalence class with at least
\(
\frac{|X| - \ell a}{c\cdot \ell^d} 
\)
elements. Thus, as long as we have
\(
\ell
\leq 
\round{\frac{|X|}{a(c+1)}}^{1/d},
\)
there will be an equivalence class with at least 
\[
\frac{|X| - \ell a}{c\cdot \ell^d} 
\geq 
\frac{|X| - \ell^d a}{c\cdot \ell^d} 
\geq
\frac{|X| - \frac{|X|}{a(c+1)}\cdot a}{c\cdot \frac{|X|}{a(c+1)}} 
=
a
\]
elements and thus the algorithm \nameref{algo:subGD} continues. In conclusion, the stopping time $T$ has to satisfy
\[
T > \round{\frac{|X|}{a(c+1)}}^{1/d}.
\]
\end{proof}



Now we are ready to establish our main result -- \Cref{thm:main}.

Assume that  \nameref{GDalgo} made $\tau$ calls to the subroutine \nameref{subalgo}, resulting in the sub-partitions $\R^1, \dots, \R^{\tau}$,
and let $T_k = |\R^k|$ be the number of parts returned by the $k$-th run  of \nameref{subalgo}. Further, 
let $\R^k = \{P_1^k, \dots, P_{T_k}^k\}$ be the parts and  $\{S_1^k, \dots, S_{T_k}^k\}$ be the ranges picked during the $k$-th run.\\

\begin{description}
\item[Bounding subroutine crossing number.] ~\\
By Properties \ref{lemma:MWU-on-ranges}, \ref{lemma:double-count}, and \ref{lemma:totalsum} and substituting $a = \floor{n/t}$, we have for any $k \in [1,\tau]$,
\begin{align*}
 &\EE\bracket{\max\limits_{S \in \S} \sum\limits_{i=1}^{T_k} \Inc (S,P^k_i) }
 \leq\\
&\leq
\frac{2\log m +4\ln(mT_k)+2}{\proba} +  \frac{2}{\ln 2}\cdot  \sum\limits_{i=1}^{T_k} ~\EE\bracket{  \sum\limits_{S \in \S} \frac{\sigma_i(S)}{\sigma_i(\S)} \cdot \Inc (S,P^k_i)} \\
& =
\frac{2\log m +4\ln(mT_k)+2}{\proba} +  \frac{2}{\ln 2} \cdot\sum\limits_{i=1}^{T_k} ~ \EE\bracket{ \sum\limits_{P \in \P} \frac{\omega_i(P)}{\omega_i(\P)} \Inc(S^k_i,P)} \\
& =
O\round{\frac{\log (mt)}{\proba} + a \log n}
=
O\round{\frac{\log  (mt)}{\proba} + \frac{n \log n}{t}}.\numberthis \label{eq:all-together}
\end{align*}

\item[Bounding $\tau$.]
We count the number of \nameref{subalgo} calls made by \nameref{GDalgo} in $\log n$ phases, where phase $j$ lasts as long as  $n/2^j < |X| \leq n/2^{j-1}$. Note that 
as $|X| \geq \floor{n/t}$, there are at most $\log t$ phases. 
Let $h_j$ denote the number of calls of \nameref{subalgo} in phase $j$. 
By Lemma \ref{lemma:haltingtime}, each call in phase $j$ outputs a sub-partition covering at least 
\[
\frac{(n/2^j)^{1/d}\floor{n/t}^{1-1/d}}{(c+1)^{1/d}}
\]
elements. Thus, $h_j$ can be bounded as
\begin{align*}
h_j
\leq
\frac{n/2^j}{\frac{(n/2^j)^{1/d}\floor{n/t}^{1-1/d}}{(c+1)^{1/d}}}
=
(c+1)^{1/d} \cdot \round{\frac{n/2^j}{n/t}}^{1-1/d}
=
(c+1)^{1/d} \cdot \round{\frac{t}{2^j}}^{1-1/d}.
\end{align*}
This gives the following bound on the total number of calls $\tau$ to \nameref{subalgo}:
\begin{equation} \label{eq:taubounding}
\tau = \sum\limits_{j=1}^{\log t} h_j
\leq 
\sum\limits_{j=1}^{\log t} (c+1)^{1/d} \cdot \round{\frac{t}{2^j}}^{1-1/d}
= 
\begin{cases}
O(t^{1-1/d})\quad &\text{ if } d >1 \\
O(\log t)\quad &\text{ if } d =1
\end{cases} \quad .
\end{equation}

\item[Bounding total crossing number.]
Note that the crossing number of the partition $\R$ returned by \nameref{algo:GD} can be bounded as
\begin{equation*}\label{eq:max-per-subroutine}
    \max\limits_{S \in \S} \sum\limits_{P \in \R} \Inc (S,P)  
    \leq 
    \max\limits_{S \in \S} \sum\limits_{k=1}^{\tau} \sum\limits_{i=1}^{T_k} \Inc (S,P^k_i) +1
    \leq
    \sum\limits_{k=1}^{\tau} \max\limits_{S \in \S} \sum\limits_{i=1}^{T_k} \Inc (S,P^k_i)+1, 
\end{equation*}
where the `$+1$' term corresponds to the last part added to $\R$.
Combining this with \Cref{eq:all-together}, we get  
$$
\EE\bracket{\max\limits_{S \in \S} \sum\limits_{P \in \R} \Inc (S,P)   } 
 \leq 
 \sum\limits_{k=1}^{\tau} \EE\bracket{ \max\limits_{S \in \S} \sum\limits_{i=1}^{T_k} \Inc (S,P^k_i) } {+}1
= O\round{  \tau \left( \frac{\log  (mt)}{\proba} + \frac{n \ln n}{t}\right)}. 
$$
 
\item[Running time.]
Finally, we bound the running time of the algorithm \nameref{algo:GD}. It makes $\tau = O\round{t^{1-1/d}}$ calls to \nameref{subalgo} and apart from that, it spends $O(n)$ time in total to update $X$ between the calls.

A key point  is that in \nameref{subalgo}, we do not store  $\P_i$ and $\omega_i$ explicitly. Instead, we  only maintain the equivalence classes on $X$ defined by $\left\{S_1, \ldots, S_{i-1}\right\}$,  that we call \emph{cells} for brevity. Each cell is stored in the form of a list $C$ of elements contained in the cell.
To obtain $\P_{i+1}$, we go through the points of each list, and partition into two lists if $S_i$ crosses that cell. During the same pass, we also delete the elements contained in $P_i$. This takes time $O(n)$ at each iteration, assuming $O(1)$ membership access.

Sampling a range $S_i$ from $\sigma_i$ can be easily done in time $O(m)$. 

Given the lists $C_1, \ldots, C_{c_i}$ at the start of iteration $i$, we can sample $P_i$ from $\omega_i$ in two steps. First, we choose a cell randomly according to the distribution 
\[
\PP\bracket{C_j \text{ is picked}} =\begin{cases}\frac{ \binom{|C_j|}{\floor{n/t}}}{ \sum_{j'} \binom{|C_{j'}|}{\floor{n/t}} } \quad &\text{ if } |C_j| \geq \floor{n/t} \\ 0 &\text{ if }|C_j| < \floor{n/t}. \end{cases}
\]   
Second, we   sample an $\floor{n/t}$-sized subset from the chosen list uniformly at random. Note that then a part of size $\floor{n/t}$ belonging to $\P_i$, say lying in a cell $C'$, is picked with probability
\[ 
\frac{1}{\binom{|C'|}{\floor{n/t}}}\cdot \frac{ \binom{\, |C'|}{\floor{n/t}}}{ \sum_{j'} \binom{|C_{j'}|}{\floor{n/t}} }
=
\frac{1}{ \sum_{j'} \binom{|C_{j'}|}{\floor{n/t}} }, 
\]    
i.e., the uniform distribution on $\P_{i}$. 

The set of ranges $\S_i$ can be computed in time $O(m)$. We have 
$\EE \left[ |\S_i| \right] = m\proba$, and the weights of the ranges in $\S_i$ can thus be updated in expected time $O\left( m\proba \cdot \frac{n}{t}\right)$. 
Thus the total expected overall time is $O\left(n + m +  m\proba \cdot \frac{n}{t} \right)$ for each iteration. 

The total number of iterations throughout the whole run of \nameref{algo:GD} is  $t$, therefore the
expected time complexity of \nameref{algo:GD} is 
\[
O\round{n + t\left(n + m +  m\proba \cdot \frac{n}{t} \right)} = O\round{nt+mt +  mnp}. 
\]\qed
\end{description}

In the remaining sections, we prove Properties \ref{lemma:MWU-on-ranges} and \ref{lemma:totalsum}. The proof of \Cref{lemma:double-count} is a standard double-counting argument, often used in the analysis of MWU algorithms, therefore it is deferred to Appendix \ref{sec:appendix-ommited}.

\subsection{Proof of Property \ref{lemma:MWU-on-ranges}.}\label{sec:lemma:MWU-on-ranges}

\lemmaMWUonranges*

\textbf{Proof.}
For a random event $\E$, let $\mathbf{1}_{\E}$ denote the indicator random variable of $\E$.

We prove the property by analysing the total weight of ranges of $\S$ after $T$ iterations. On one hand, the total weight can be lower-bounded by the maximum:
\[
\sigma_{T+1}(\S) \geq \max\limits_{S \in \S} \sigma_{T+1}(S) = 2^{\max\limits_{S \in \S}\sum\limits_{i=1}^T \Inc(S,P_i)~\cdot~ \mathbf{1}_{ \{S\in \S_i\}}}.
\]
On the other hand, we can examine the evolution of weights over the iterations as follows:
\begin{align*}
\sigma_{T+1}(\S) 
&= 
\sigma_{T}(\S) + \sum\limits_{S \in \S} \sigma_{T}(S) \cdot \Inc(S,P_{T}) ~\cdot~ \mathbf{1}_{ \{S\in \S_{T}\}}\\
&=
 \sigma_{T}(\S)\round{1 +  \sum\limits_{S \in \S} \frac{\sigma_{T}(S)}{\sigma_{T}(\S)} \cdot \Inc(S,P_{T})~\cdot~ \mathbf{1}_{ \{S\in \S_{T}\}}}\\
&
\hspace*{1.0in} \vdots  \\
&  =
\sigma_1(\S)\cdot  \prod\limits_{i=1}^{T} \round{1 +  \sum\limits_{S \in \S} \frac{\sigma_{i}(S)}{\sigma_{i}(\S)} \cdot \Inc(S,P_{i})~\cdot~ \mathbf{1}_{ \{S\in \S_i\}} } \\
& \leq
m \cdot \exp\round{ \sum\limits_{i=1}^{T}\sum\limits_{S \in \S} \frac{\sigma_{i}(S)}{\sigma_{i}(\S)} \cdot \Inc(S,P_{i}) ~\cdot~ \mathbf{1}_{ \{S\in \S_i\}}}.
\end{align*}

Putting the two bounds together and taking the logarithm, we get
\begin{equation}\label{eq:wts-before-EE}
\max\limits_{S \in \S} \sum\limits_{i=1}^T \Inc(S,P_i) ~\cdot~ \mathbf{1}_{ \{S\in \S_i\}}
\leq 
\log m +  \frac{1}{\ln 2} \sum\limits_{i=1}^{T} \sum\limits_{S \in \S} \frac{\sigma_{i}(S)}{\sigma_{i}(\S)} \cdot \Inc(S,P_{i})~\cdot~ \mathbf{1}_{ \{S\in \S_i\}}.
\end{equation}

If $\proba = 1$, then $\mathbf{1}_{\{S \in \S_i\}} = 1$ for all $i$ and  $S\in \S$, thus taking total expectation we obtain
\[
\EE\bracket{
\max\limits_{S \in \S} \sum\limits_{i=1}^T \Inc(S,P_i) 
}
\leq 
\log m +  \frac{1}{\ln 2} 
\EE\bracket{
\sum\limits_{i=1}^{T} \sum\limits_{S \in \S} \frac{\sigma_{i}(S)}{\sigma_{i}(\S)} \cdot \Inc(S,P_{i})
},
\]
which implies \Cref{lemma:MWU-on-ranges}. 
Therefore let us assume that $\proba < 1$.
     
Observe that  for each fixed iteration $i$, the random variables $\{\sigma_i,P_i\}$ and $\S_i$ are independent, thus the expected value of the RHS of Equation~\eqref{eq:wts-before-EE} can be written as
    \begin{align}
        \log m &+ \frac{1}{\ln 2}\sum\limits_{i=1}^{T} \sum\limits_{S \in \S} \EE\bracket{\frac{\sigma_{i} (S)}{\sigma_{i}(\S)}   \Inc(S,P_i)}\EE\bracket{ \bm{1}_{\{S \in \S_i\}}} \nonumber \\ 
        &=
         \log m + \frac{\proba}{\ln 2} \cdot\sum\limits_{i=1}^{T} \sum\limits_{S \in \S} \EE\bracket{\frac{\sigma_{i} (S)}{\sigma_{i}(\S)}   \Inc(S,P_i)}. \label{eq:probaS-linearity-of-exp}
    \end{align}
Separating the two factors in the expectation of the LHS of Equation~\eqref{eq:wts-before-EE} is less straightforward. Our goal is to show that $ \EE\bracket{
\max\limits_{S \in \S} \sum\limits_{i=1}^T \Inc(S,P_i) \cdot \proba
}$ is close to $\EE\bracket{
\max\limits_{S \in \S} \sum\limits_{i=1}^T \Inc(S,P_i) ~\cdot~ \mathbf{1}_{ \{S\in \S_i\}}
}$. However, since we have a maximization, we cannot use linearity of expectation and so we need to be careful with the cross-iteration dependencies (in particular, $\{\sigma_i,P_i\}$ depend on $\S_1, \dots, \S_{i-1}$).
Instead, we observe that we only need to show that 
\begin{equation}\label{eq:1q-UB}
 \EE\bracket{
\max\limits_{S \in \S} \sum\limits_{i=1}^T \Inc(S,P_i) \cdot \frac{\proba}{2}
} 
\leq 
 \EE\bracket{
\max\limits_{S \in \S} \sum\limits_{i=1}^T \Inc(S,P_i) ~\cdot~ \mathbf{1}_{ \{S\in \S_i\}}
}   
+ 2\ln(mT)+1.
\end{equation}
Indeed, using this bound, Equations~\eqref{eq:wts-before-EE} and \eqref{eq:1q-UB} give the following guarantee in expectation:
\begin{align*}
\frac{\proba}{2} \cdot \EE\bracket{
\max\limits_{S \in \S} \sum\limits_{i=1}^T \Inc(S,P_i) } 
&\leq 
 \EE\bracket{
\max\limits_{S \in \S} \sum\limits_{i=1}^T \Inc(S,P_i) ~\cdot~ \mathbf{1}_{ \{S\in \S_i\}}
}   
+ 2\ln(mT)+1\\
&\leq 
\log m + \frac{\proba}{\ln 2} \cdot\sum\limits_{i=1}^{T} \sum\limits_{S \in \S} \EE\bracket{\frac{\sigma_{i} (S)}{\sigma_{i}(\S)}   \Inc(S,P_i)} +2\ln(mT)+1,
\end{align*}
which gives the statement of \Cref{lemma:MWU-on-ranges}.


 Hence, all that remains is to prove that Equation~\eqref{eq:1q-UB} holds.
Observe that for any choice of functions $f,g$, we have $\max f(x) - \max g(x) \leq \max (f(x) - g(x))$, which gives
\begin{align}
\EE&\bracket{
\max\limits_{S \in \S} \sum\limits_{i=1}^T \Inc(S,P_i) \cdot \frac\proba 2
 - 
\max\limits_{S \in \S} \sum\limits_{i=1}^T \Inc(S,P_i)\cdot \mathbf{1}_{ \{S\in \S_i\}} } \nonumber \\
&\leq
\EE\bracket{
  \max\limits_{S \in \S} \sum\limits_{i=1}^T \Inc(S,P_i) \cdot\round{\frac\proba 2 - \mathbf{1}_{ \{S\in \S_i\}}} \label{eq:maxmax}
} \,.
\end{align}
We will prove the following claim about the probability of the RHS random variable exceeding $2\ln(mT)$. Let $U=\floor{|X|/a}$ be a fixed upper bound on $T$.

\begin{restatable}{claim}{martingaleclaim}\label{claim:martingale-bound}
\begin{equation}
    \PP\bracket{
    \max_{S\in\S}\sum_{i=1}^{T}
    \Inc(S,P_i)
    \round{\frac{\proba}{2}-\bm 1_{\{S\in\S_i\}}}
    \geq 2\ln(m{\color{myred}U})}
    \leq \frac{1}{{\color{myred}U}}.
\end{equation}
\end{restatable}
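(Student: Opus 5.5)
Fix a range $S\in\S$ and prove a tail bound for it alone; a union bound over the $m$ ranges then finishes. For fixed $S$, let $\F_i$ be the $\sigma$-algebra generated by everything chosen up to and including $P_i$ in iteration $i$, that is, $S_1,\dots,S_i$, $P_1,\dots,P_i$ and $\S_1,\dots,\S_{i-1}$. Then $Y_i := \Inc(S,P_i)$ is $\F_i$-measurable. The sample $\S_i$ is drawn independently of $\F_i$, so $Z_i:=\bm 1_{\{S\in\S_i\}}$ is a Bernoulli$(\proba)$ variable independent of $\F_i$. To handle the random stopping time $T\le U$, I set $Y_i=0$ for $i>T$. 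This is legitimate because the event $\{T\ge i\}$, i.e.\ $\P_i\neq\emptyset$, is determined before $P_i$ is drawn. With this convention the sum can be taken over $i=1,\dots,U$.

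The core is an exponential supermartingale with parameter $\lambda=\frac12$. Define
\[
M_k=\exp\Big(\lambda\sum_{i\le k} Y_i\big(\tfrac{\proba}{2}-Z_i\big)\Big).
\]
Conditioned on $\F_i$, we need
\[
\EE\big[e^{\lambda Y_i(\proba/2-Z_i)}\mid\F_i\big]=\Big(e^{\lambda\proba/2}\big(1-\proba+\proba e^{-\lambda}\big)\Big)^{Y_i}\le 1 .
\]
This follows from $1-\proba(1-e^{-\lambda})\le e^{-\proba(1-e^{-\lambda})}$ together with $1-e^{-1/2}\approx 0.393\ge \frac14=\lambda/2$. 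Hence $\EE[M_U]\le 1$, by the tower property applied iteration by iteration. The key point is that the independence of $Z_i$ from $\F_i$ is what lets us average over $Z_i$ while the past is fixed.

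Markov's inequality then gives
\[
\PP\Big[\sum_i Y_i(\tfrac\proba2-Z_i)\ge 2\ln(mU)\Big]\le e^{-\lambda\cdot 2\ln(mU)}=(mU)^{-1}.
\]
A union bound over the $m$ ranges yields probability at most $1/U$, which is the claim.

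The main obstacle is the measurability bookkeeping. The indicator $Z_i$ must be independent of the current part $P_i$ and of the current weights, even though $P_i$ depends on earlier samples $\S_j$ for $j<i$ through $\sigma_i$ (and hence through $S_i$). The filtration above is chosen precisely so that this independence holds. A secondary check is that $\lambda=1/2$ gives exactly the constant $2$ in front of $\ln(mU)$, which is why the centering uses $\proba/2$ rather than $\proba$.
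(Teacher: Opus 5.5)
Your proof is correct. It differs from the paper's proof of this claim mainly in how the per-range tail bound is obtained.

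The paper does not construct the supermartingale itself. It sets $x_i(S)=\proba\cdot\Inc(S,P_i)$ and $y_i(S)=\Inc(S,P_i)\cdot\bm 1_{\{S\in\S_i\}}$ and checks the hypotheses of the Azuma-type inequality of Koufogiannakis and Young: non-negative terms, $|x_i-y_i|\le 1$, zero conditional mean of $x_i-y_i$ given the past partial sums (by independence of $\S_i$), and a random stopping time. It applies that lemma with $\varepsilon=1/2$ to get a tail of $\exp(-A/2)$ for each $S$, and finishes with the same union bound at $A=2\ln(mU)$.

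You instead re-derive exactly the special case that is needed. You truncate at the deterministic horizon $U\ge T$, bound the conditional factor $e^{\proba/4}(1-\proba+\proba e^{-1/2})$ by $1$, and apply Markov. The cited lemma is itself proved by a product supermartingale of this kind, so the two proofs rest on the same idea.

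What each buys:
\begin{itemize}
\item The paper's version is shorter and relies on a general tool that handles arbitrary stopping times under only a conditional-mean hypothesis.
\item Yours is self-contained and makes the filtration and the role of $T\le U$ explicit. It also produces a stronger intermediate fact: $\exp\bigl(\tfrac12\sum_{i}\Inc(S,P_i)(\tfrac{\proba}{2}-\bm 1_{\{S\in\S_i\}})\bigr)$ has expectation at most $1$ for every $S$. That is precisely the computation the paper performs right after the claim to get $\EE[\mathcal M]\le 2\ln m$ via Jensen. So your route gives both the claim and that expectation bound from a single lemma.
\end{itemize}

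There is a small slip in your last paragraph. $P_i$ depends on $\S_j$, $j<i$, through $S_1,\dots,S_{i-1}$ and $P_1,\dots,P_{i-1}$, not through $\sigma_i$ or $S_i$. This is harmless, because your $\F_i$ contains all of these.
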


The proof of \Cref{claim:martingale-bound} is a standard application of an Azuma-type inequality for martingales, its proof is deferred to Appendix \ref{sec:appendix-ommited}. 

Applying \Cref{claim:martingale-bound} and using that
$\abs{\sum_{i=1}^{T} \Inc(S,P_i)\cdot
\round{\frac{\proba}{2} - \bm 1_{\{S \in \S_i\}}}}
\leq T{\color{myred}\leq U}$, we get
\begin{align*}
&\EE\bracket{\underbrace{
\max\limits_{S \in \S} \sum\limits_{i=1}^{T}
\Inc(S,P_i)\cdot
\round{\frac{\proba}{2} - \bm 1_{\{S \in \S_i\}}}
}_{\mathcal M}}\\
&{\color{myred}=
\EE\bracket{\mathcal M\cdot
\bm 1_{\{\mathcal M\leq 2\ln(mU)\}}}
+
\EE\bracket{\mathcal M\cdot
\bm 1_{\{\mathcal M>2\ln(mU)\}}}}\\
&\leq
2\ln(m{\color{myred}U})\cdot
\PP\bracket{\mathcal M\leq 2\ln(m{\color{myred}U})}
+
{\color{myred}U}\cdot
\PP\bracket{\mathcal M>2\ln(m{\color{myred}U})}\\
&{\color{myred}\leq
2\ln(mU)+U\cdot\frac{1}{U}
=2\ln(mU)+1.}
\end{align*}
For each fixed range $S$, consider the exponential of half
its sum over iterations $1,\ldots,T$, setting all increments
after iteration $T$ to zero.
Conditional on the history just before $\S_i$ is sampled,
a crossing iteration contributes a factor with mean
$e^{\proba/4}(1-\proba+\proba e^{-1/2})
\leq e^{\proba/4}(1-\proba/4)\leq1$;
otherwise, including after stopping, the factor is $1$.
Applying conditional expectations successively over the fixed
$U$ steps shows that each range's expected exponential is at
most $1$.
Since the maximum of these exponentials is at most their sum,
$\EE\bracket{\exp(\mathcal M/2)}\leq m$.
Jensen's inequality and $T\geq1$ therefore give

$$
\EE\bracket{\mathcal M}
\leq 2\ln\round{\EE\bracket{\exp(\mathcal M/2)}}
\leq 2\ln m
\leq \EE\bracket{2\ln(mT)}+1.
$$

This, together with Equation~\eqref{eq:maxmax}, proves
Equation~\eqref{eq:1q-UB}.


This concludes the proof of \Cref{lemma:MWU-on-ranges}.  
\qed

\newpage

\subsection{Proof of Property \ref{lemma:totalsum}.}\label{sec:lemma:totalsum}

\lemmatotalsum*

\textbf{Proof.}
Recall that $\omega_1, \omega_2, \dots, \omega_T$ are uniform distributions on decreasing supports. In particular, there are sets of parts $\P =  \P_1 \supseteq \dots \supseteq \P_T$ such that 
\[
\omega_i(P) 
= 
\begin{cases} 1 \quad &\text{ if } P \in \P_i \\ 0 &\text{ if } P \in \P \setminus \P_i. \end{cases}
\]
Thus we can express the sum as follows:
\[
\sum\limits_{i=1}^T \sum\limits_{P \in \P}    \frac{\omega_{i} (P)}{\omega_{i}(\P)}   \Inc(S_i,P) 
=
\sum\limits_{i=1}^T \frac{\abs{\curly{P \in \P_i : S_i \text{ crosses } P}}}{|\P_i|}.
\]
Moreover, $\P_i \setminus \P_{i+1}$ consists of those elements (parts) in $\P_i$ that are  
\begin{itemize}
    \item crossed by the range $S_i$, or
    \item intersected by the part $P_i$.
\end{itemize}
Let $c_i$ denote the number of elements in $\P_i$ crossed by $S_i$, and let $m_i = \abs{\P_i \setminus \P_{i+1}}$. Then, $c_i \leq m_i$ and
\[
\abs{\P_{i+1}}
=
\abs{\P_{i}} - m_i
= \dots =
\abs{\P_{1}} - \sum\limits_{j=1}^i m_j. 
\]
Thus, 
\[
\sum\limits_{i=1}^T \frac{\abs{\curly{P \in \P_i : S_i \text{ crosses } P}}}{|\P_i|}
=
\sum\limits_{i=1}^T \frac{c_i}{\abs{\P_{1}} - \sum\limits_{j=1}^{i-1} m_j}
\leq 
\sum\limits_{i=1}^T \frac{m_i}{\abs{\P_{1}} - \sum\limits_{j=1}^{i-1} m_j}\, .
\]
The proof of \Cref{lemma:totalsum} follows immediately from the next claim by substituting $A_i = m_i$ and $N = |\P_1| = \binom{|X|}{a}$.
\begin{claim}
For any $N \in \NN$ and any sequence $\round{A_i}_{i=1}^\infty$ such that $A_i \in \NN$ and $A_i \leq N - \sum\limits_{j=1}^{i-1}A_j$, we have
\[
\sum_{i=1}^\infty \frac{A_i}{N-\sum\limits_{j=1}^{i-1} A_j} = O\round{\log N}\,.
\]
\end{claim}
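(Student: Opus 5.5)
We compare the sum with an integral of $1/x$. Set $R_i = N - \sum_{j=1}^{i-1} A_j$, the ``remaining mass'' before step $i$. By hypothesis $0 \le A_i \le R_i$, and $R_{i+1} = R_i - A_i$. The sequence $(R_i)$ is a non-increasing sequence of non-negative integers starting at $R_1 = N$. We adopt the convention that terms with $R_i = 0$ (which force $A_i = 0$) contribute $0$. Each nonzero term then has the form $A_i/R_i$ with $R_i \ge 1$.

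The key step is to bound each term by an integral. For integers $R \ge A \ge 1$ we have
\[
\frac{A}{R} \;=\; \sum_{x=R-A+1}^{R} \frac{1}{R} \;\le\; \sum_{x=R-A+1}^{R} \frac{1}{x}.
\]
Each of the $A$ summands $1/R$ is at most $1/x$ for the corresponding $x \le R$. Applying this with $A = A_i$ and $R = R_i$ bounds the $i$-th term by $\sum_{x=R_{i+1}+1}^{R_i} 1/x$, using $R_i - A_i = R_{i+1}$. Terms with $A_i = 0$ contribute nothing, and the bound holds trivially for them.

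The intervals $(R_{i+1}, R_i]$ are pairwise disjoint and all lie in $[1, N]$, since the $R_i$ are non-increasing. Summing over $i$ therefore gives
\[
\sum_{i\ge 1} \frac{A_i}{R_i} \;\le\; \sum_{x=1}^{N} \frac{1}{x} \;=\; H_N \;\le\; 1 + \ln N \;=\; O(\log N).
\]
This holds even for the infinite sum: all but finitely many $A_i$ vanish, because the $A_i$ are non-negative integers whose partial sums are at most $N$.

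There is no real obstacle here. The only care needed is to treat the degenerate terms with $R_i = 0$ by the stated convention, and to note that integrality of $A_i$ is what makes the harmonic comparison exact. If the $A_i$ were not integers, one would instead use $A/R \le \ln\bigl(R/(R-A)\bigr)$, which diverges at $A = R$; this is why integrality matters. Plugging in $N = \binom{|X|}{a}$ gives $\log N \le a \log |X|$, which yields Property~\ref{lemma:totalsum}.
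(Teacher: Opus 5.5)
Your proof is correct and follows essentially the same route as the paper. Both write $A_i/R_i$ as $A_i$ copies of $1/R_i$, dominate these by the harmonic terms $1/R_i, 1/(R_i-1), \dots, 1/(R_{i+1}+1)$, and observe that these blocks are disjoint across $i$, so the whole sum is at most $H_N = O(\log N)$; your indexing by the disjoint integer intervals $(R_{i+1}, R_i]$ is just a slightly cleaner rendering of the paper's ``each term $1/(N-b)$ appears once'' step.
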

\begin{proof}
Observe that since $A_i \in \NN$ and $A_i \leq N - \sum\limits_{j=1}^{i-1}A_j$, the series $\round{A_i}_{i=1}^\infty$ has at most $N$ positive terms. Let $k \leq N$ be the number of positive terms and assume wlog that $A_1, \dots, A_k > 0$. Then we have that 
\[
\sum_{i=1}^\infty \frac{A_i}{N-\sum\limits_{j=1}^{i-1} A_j} 
= 
\sum_{i=1}^k \frac{A_i}{N-\sum\limits_{j=1}^{i-1} A_j} \, .
\]
Furthermore, observe that for any $i \in [1,k]$,
\begin{align*}
\frac{A_i}{N-\sum\limits_{j=1}^{i-1} A_j}   
&=
\underbrace{\frac{1}{N-\sum\limits_{j=1}^{i-1} A_j}+ \dots + \frac{1}{N-\sum\limits_{j=1}^{i-1} A_j} }_{A_i \text{ times }}\\
&\leq
\frac{1}{N-\sum\limits_{j=1}^{i-1} A_j}+\frac{1}{N-\round{ 1+\sum\limits_{j=1}^{i-1} A_j}}+ \dots + \frac{1}{N-\round{A_i -1 + \sum\limits_{j=1}^{i-1} A_j}} \\
& = \sum_{b=0}^{A_i-1} \frac{1}{N- b - \sum\limits_{j=1}^{i-1} A_j}. 
\end{align*}
Note that for each $b \in [1, N]$, the term $\frac{1}{N - b}$ appears precisely once over all $k$, and so
\begin{align*}
\sum_{i=1}^k \frac{A_i}{N-\sum\limits_{j=1}^{i-1} A_j} &\leq \sum_{i=1}^k \sum_{b=0}^{A_i-1} \frac{1}{N- b - \sum\limits_{j=1}^{i-1} A_j}
 =  \sum_{b=0}^{ \round{\sum_{i=1}^k A_i} - 1} \frac{1}{N- b}\\
 &\leq \sum_{b=0}^{N-1} \frac{1}{N-b} = O \left( \log N \right). 
\end{align*}
\end{proof}
This concludes the proof of \Cref{lemma:totalsum}. 
\qed

\bibliographystyle{plain}
\bibliography{spanningtreealgo}

\appendix

\section{Proofs of \Cref{claim:martingale-bound} and Property \ref{lemma:double-count}.}\label{sec:appendix-ommited}

\martingaleclaim*

\begin{proof}

We use the following Azuma-type inequality for martingales.

    \begin{lemma}[{\cite[Lemma 10]{KY14}}]\label{lemma:Azuma}
        Let $X = \sum_{i=1}^r x_i$ and $Y = \sum_{i=1}^r y_i$  be sums of non-negative random variables, where $r$ is a random stopping time with finite expectation, $|x_i - y_i | \leq 1$ for all $i$, and
        \[
            \EE \bracket{ x_i - y_i ~\bigg|~ \sum_{s < i} x_s, \sum_{s < i} y_s} \leq 0.
        \]
        Then for any $\eps \in [0,1]$ and $A \in \RR$, we have 
        \(
            \PP\bracket{ (1-\eps) X -Y \geq A } \leq \exp (-\eps A).
        \)
    \end{lemma}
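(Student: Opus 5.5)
The plan is the standard exponential-supermartingale (Chernoff/Markov) argument with optional stopping. Set $d_i = x_i - y_i \in [-1,1]$ and
\[
Z_j = \exp\Big(\eps \sum_{i\le \min(j,r)} \big((1-\eps)x_i - y_i\big)\Big), \qquad Z_0 = 1 .
\]
On the event $\{(1-\eps)X - Y \ge A\}$ we have $Z_r \ge e^{\eps A}$. So if $(Z_j)$ is a nonnegative supermartingale with respect to the filtration generated by the partial sums, then $\EE[Z_r]\le 1$, and Markov's inequality gives $\PP[(1-\eps)X-Y\ge A]\le e^{-\eps A}$.

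The key step is the one-step inequality
\[
\EE\big[e^{\eps((1-\eps)x_i - y_i)} \,\big|\, \text{history}\big] \le 1 .
\]
Write the exponent as $\eps d_i - \eps^2 x_i$. I would aim for a pointwise bound of the form
\[
e^{\eps d - \eps^2 x} \le 1 + \eps d + (\text{a term that is }\le 0\text{ or vanishes in expectation}),
\]
valid for all $x,y\ge 0$ with $|x-y|\le 1$. Taking conditional expectation then gives a bound of $1+\eps\,\EE[d_i\mid\cdot]\le 1$.

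For $d\ge 0$ this works directly. Nonnegativity of $y$ gives $d\le x$, and the left side is decreasing in $x$, so it suffices to take $x=d\le 1$. The needed inequality $\eps(1-\eps)d\le \ln(1+\eps d)$ follows from $\ln(1+u)\ge u-u^2/2$ together with $d^2\le d$.

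For $d<0$ the penalty $e^{-\eps^2 x}$ can be as weak as $1$ (when $x=0$), and $e^{\eps d}>1+\eps d$. So a bound that is purely linear in $d$ cannot hold pointwise, and I expect this to be the main obstacle. I would first try the multiplicative form used by Koufogiannakis and Young, namely $(1+\eps)^{x_i}(1-\eps)^{y_i}$ in place of the pure exponential, using $(1+\eps)^{x}(1-\eps)^{y} \le (1+\eps)^{d}$ for $d\ge0$ and $\le(1-\eps)^{|d|}$ for $d<0$. I would then exploit convexity of $u\mapsto(1+\eps)^u$ on $[-1,1]$, bounding it by its chord $1+\eps u$ on $[0,1]$ and handling the negative side by $(1-\eps)^{|u|}\le 1-\eps|u|$. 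Combined with $\ln(1+\eps)\ge \eps(1-\eps)$ and $\ln(1-\eps)\le-\eps$, this gives $(1+\eps)^X(1-\eps)^Y\ge e^{\eps((1-\eps)X-Y)}$, so the supermartingale can be run on the multiplicative quantity instead.

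If the negative-$d$ case still resists, I would fall back to the structure in which the lemma is actually applied in \Cref{claim:martingale-bound}. There the increments have bounded ranges, so the one-step bound can be checked by a direct two-point computation, exactly as in the alternative exponential computation sketched after Equation~\eqref{eq:maxmax}.

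The last step is to pass from the supermartingale to the stopped value $Z_r$. Since $\EE[r]<\infty$ and $Z\ge0$, Fatou's lemma applied to $Z_{\min(j,r)}$ gives $\EE[Z_r]\le\liminf_j \EE[Z_{\min(j,r)}]\le 1$. Markov's inequality then concludes the proof.
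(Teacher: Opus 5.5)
The paper does not prove this lemma; it imports it from Koufogiannakis--Young. So your argument has to stand on its own, and as written it has a genuine error at the step that links the potential to the event. In the multiplicative route you claim $(1+\eps)^X(1-\eps)^Y \ge e^{\eps((1-\eps)X-Y)}$, citing $\ln(1+\eps)\ge\eps(1-\eps)$ and $\ln(1-\eps)\le-\eps$. The second inequality points the wrong way: it gives $(1-\eps)^Y\le e^{-\eps Y}$, which is an \emph{upper} bound on the $Y$-factor. The claimed inequality is in fact false; for example, $X=0$, $Y=1$ gives $1-\eps<e^{-\eps}$. So you have not shown that $(1+\eps)^X(1-\eps)^Y\ge e^{\eps A}$ on $\{(1-\eps)X-Y\ge A\}$, and Markov's inequality cannot be invoked.

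Your one-step bound $(1+\eps)^x(1-\eps)^y\le 1+\eps(x-y)$ and the Fatou step are correct, so the route is repairable. The cases $A<0$ and $\eps\in\{0,1\}$ are trivial. For $A\ge 0$ and $\eps\in(0,1)$, use the event to eliminate $X$ rather than compare pointwise: from $X\ge (Y+A)/(1-\eps)$,
\[
\ln\bigl((1+\eps)^X(1-\eps)^Y\bigr)\ \ge\ \frac{\ln(1+\eps)}{1-\eps}\,A+\frac{Y}{1-\eps}\bigl(\ln(1+\eps)+(1-\eps)\ln(1-\eps)\bigr)\ \ge\ \eps A .
\]
This uses $\ln(1+\eps)\ge\eps(1-\eps)$ and the nonnegativity of the bracket. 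The bracket vanishes at $\eps=0$ and has derivative $-\eps/(1+\eps)-\ln(1-\eps)\ge 0$.

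Alternatively, the obstacle that made you abandon $Z_j$ is not real. Since the conditional mean of $d_i$ is nonpositive, you only need $e^{\eps(1-\eps)x-\eps y}\le 1+\lambda(x-y)$ for \emph{some} $\lambda\ge 0$, not for $\lambda=\eps$. Take $\lambda=1-e^{-\eps}$. By your monotonicity-in-$x$ reduction, it suffices to bound $e^{\eps(1-\eps)d}$ on $[0,1]$ and $e^{\eps d}$ on $[-1,0]$. Both are convex and equal $1$ at $d=0$, so you only need to check $d=\pm 1$:
\begin{itemize}
\item At $d=-1$ the bound holds with equality.
\item At $d=1$ it reads $e^{\eps-\eps^2}+e^{-\eps}\le 2$. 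This follows from $e^{-\eps}\le 1-\eps+\eps^2/2$ and $e^{\eps-\eps^2}\le 1+\eps-\eps^2/2$, the latter via $\ln(1+u)\ge u-u^2/2$ with $u=\eps-\eps^2/2$.
\end{itemize}
With this, the one-step bound gives $\EE[Z_{\min(j,r)}]\le 1$ directly, and $Z_r\ge e^{\eps A}$ on the event is immediate from its definition, so no comparison inequality is needed at all. Finally, your fallback to the two-point computation (the paper's direct bound on $\EE[\exp(\mathcal M/2)]$) only covers the paper's specific application. It would not prove the lemma as stated, which is for arbitrary nonnegative $x_i,y_i$ with $|x_i-y_i|\le 1$.
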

    We apply \Cref{lemma:Azuma} in the following way. 
    For each $i\in [1, T]$ and $S \in \S$, consider the random variables 
\[x_i(S) = \Inc(S,P_i)\cdot \proba \qquad \text{and} \qquad y_i(S) = \Inc(S,P_i)\cdot\bm 1_{\{S \in \S_i\}}, \]
 which are measurable with respect to $P_i$ and $\S_i$. For any $i$ and $S \in \S$, we have $|x_i(S) - y_i(S) | \leq 1$.
Furthermore, since $\S_i$ is independent of $\curly{P_i,\, \sum_{k < i} x_k(S),\,\sum_{k < i} y_k(S)}$, we have
            \[
                \EE\bracket{x_i(S) - y_i(S) ~\bigg|~ \sum_{k < i} x_k(S), \sum_{k < i} y_k(S)} = 0,
            \]
        as $\EE\bracket{\proba - \bm 1_{\{S \in \S_i\}}} =0$ for any $i \in [1,T]$ and $S \in \S$.

        \noindent
        Therefore, \Cref{lemma:Azuma} with $\eps = 1/2$ implies that for any $A \in \RR$ and $S \in \S$, 
\begin{align*}
                 \PP\bracket{\frac{1}{2} \sum\limits_{i=1}^{T}  \Inc(S,P_i) \cdot \proba  - \sum\limits_{i=1}^{T}  \Inc(S,P_i) \cdot  \bm 1_{\{S \in \S_i\}}  \geq A} \leq \exp\left(-\frac{A}{2}\right).
\end{align*}
Rearranging the terms and applying the union bound over all $S \in \S$, we get 
            \begin{align*}
                 \PP\bracket{\max\limits_{S \in \S} \sum\limits_{i=1}^{T}  \Inc(S,P_i)  \cdot\round{\frac \proba 2 - \bm 1_{\{S \in \S_i\}} } \geq A} \leq |\S|\exp\left(-\frac{A}{2}\right).
            \end{align*}
            Setting $A =  2\ln(|\S|U)$, we conclude the proof of \Cref{claim:martingale-bound}.
   \end{proof}

\lemmadoublecount*

\textbf{Proof.}
     The proof will follow by conditioning on the past choices of the algorithm. Let  $F_{i}$ be one realization of all random variables used by the algorithm in iterations $j < i$; that is,  $ F_{i} =\round{P_1,\dots, P_{i-1}, S_1, \dots, S_{i-1}, \S_1, \dots, \S_{i-1}}$ and let $\F_i$ denote the set of all such feasible tuples. We have
\begin{align*}
    &\EE \bracket{ \sum\limits_{S \in \S} \frac{\sigma_{i} (S)}{\sigma_{i}(\S)}   \Inc(S,P_i) }
    =
    \sum\limits_{\phi \in \F_{i}}\EE_{P_i \sim \omega_i} \bracket{ \sum\limits_{S \in \S} \frac{\sigma_{i} (S)}{\sigma_{i}(\S)}    \Inc (S,P_i) ~\bigg|~ F_{i}= \phi}  \cdot \PP \bracket{F_i = \phi},  \text{ and}\\
    &\EE \bracket{\sum\limits_{P \in \P} \frac{\omega_{i} (P)}{\omega_{i}(\P)}   \Inc(S_i,P) }
    =
    \sum\limits_{\phi \in \F_{i}}\EE_{S_i \sim \sigma_i} \bracket{ \sum\limits_{P \in \P} \frac{\omega_{i} (P)}{\omega_{i}(\P)}   \Inc(S_i,P) ~\bigg|~ F_{i}= \phi}  \cdot \PP \bracket{F_i = \phi}.
\end{align*}
  Observe that $\omega_i$ and $\sigma_i$ are measurable with respect to $F_{i}$, and thus for any $\phi \in \F_i$, we have
\begin{align*}
        &\EE_{P_i \sim \omega_i}\bracket{ \sum\limits_{S \in \S} \frac{\sigma_{i} (S)}{\sigma_{i}(\S)}  \cdot \Inc(S,P_i)~\bigg|~ F_{i}= \phi}
        =
        \sum\limits_{P \in \P} \frac{\omega_{i} (P)}{\omega_{i}(\P)} \cdot  \round{ \sum\limits_{S \in \S} \frac{\sigma_{i} (S)}{\sigma_{i}(\S)}  \cdot \Inc(S,P)}\\
        &=
        \sum\limits_{P \in \P} \sum\limits_{S \in \S} \frac{\omega_{i} (P)}{\omega_{i}(\P)} \cdot    \frac{\sigma_{i} (S)}{\sigma_{i}(\S)}  \cdot \Inc(S,P)
        =
        \sum\limits_{S \in \S} \frac{\sigma_{i} (S)}{\sigma_{i}(\S)}  \cdot \round{  \sum\limits_{P \in \P}    \frac{\omega_{i} (P)}{\omega_{i}(\P)}  \cdot \Inc(S,P) } \\
       & =
        \EE_{S_i\sim \sigma_i}\bracket{ \sum\limits_{P \in \P} \frac{\omega_{i} (P)}{\omega_{i}(\P)}  \cdot \Inc(S_i,P) ~\bigg|~ F_{i}= \phi}.
    \end{align*}
Since we have equality for each  $\phi \in \F_i$, the total expectations are also equal.
\qed

\section{Proofs of Corollaries}

\subsection{Proof of Corollary \ref{cor:spanpath}}\label{sec:proof_of_corollary_ref_cor_spanpath}

We will create a spanning path as a union of matchings, iteratively constructed on exponentially decreasing input point-sets.

Let $M_1$ be the matching given by \Cref{thm:mainVCmatchings} for the input $(X,\S)$. Let $X_2$ be a set of $\floor{n/2}$ vertices obtained by keeping only one endpoint of each edge in $M_1$. Now we apply \Cref{thm:mainVCmatchings} for the input $(X_2,\S|_{X_2})$ and denote the resulting matching by $M_2$. Observe that $M_1 \cup M_2$ consists of $\floor{n/4}$ disjoint paths of length $3$ (plus at most one smaller component). We continue with $X_3$ which is a set of $\ceil{n/4}$ vertices obtained by keeping only one extremity of each path in $M_1\cup M_2$. Applying \Cref{thm:mainVCmatchings} to $(X_3,\S|_{X_3})$, we obtain a matching $M_3$ such that $M_1 \cup M_2\cup M_3$ consists of $\floor{n/8}$ disjoint paths of length $7$. 
We continue like that for $t=\log n$ iterations, after which $M_1 \cup \dots\cup M_t$ is a single spanning path. 

The crossing number of $P = M_1 \cup \dots\cup M_t$ can be bounded by the sum of crossing numbers of the $M_i$'s. Note that $M_i$ is the result of the $i^{\rm th}$ iterative application of \Cref{thm:mainVCmatchings} where input is $(X_i,\S|_{X_i})$ with $|X_i| = \floor{n/2^{i-1}}$. Therefore, by \Cref{thm:mainVCmatchings}, the crossing number of $M_i$ w.r.t. $\S|_{X_i}$ (and equivalently w.r.t. $\S$) is $O\round{\round{n/2^{i-1}}^{1-1/d}\ln(mn)}$.
Thus, the crossing number of $P$ can be bounded as
\[
    \sum\limits_{i=0}^{\log n} O\round{\round{\frac{n}{2^i}}^{1-1/d}\ln(mn)}
    =
    O\round{n^{1-1/d}\ln(mn)\sum\limits_{i=0}^{\log n}\round{\frac{1}{2^{1-1/d}}}^i}
    =
    \aO\round{n^{1-1/d}}.
\]
Similarly, the total running time can be bounded as
\[
    \sum\limits_{i=0}^{\log n} \aO\round{m\cdot \frac{n}{2^i} +\round{\frac{n}{2^i}}^2 }
    =
    \aO\round{(mn + n^2)\sum\limits_{i=0}^{\log n}\frac{1}{2^i}}
    =
    \aO\round{mn + n^2}.
\]

\subsection{Proof of Corollary \ref{cor:discrepancy}}
There is a direct link between low-crossing matchings and low-discrepancy colorings.

\begin{restatable}{lemma}{matchtodiscLemma}\label{lemma:matching-to-disc}
    Let $(X,\S)$ be a set system, $n = |X|$, $m = |\S|\geq 34$, and let $M$ be a perfect matching of $X$ with crossing number $\kappa$ with respect to $\S$ and for each edge $\{x,y\} \in M$ define
     \[
     \chi_M(x) = \begin{cases} 1 &\text{ with probability } 1/2 \\ -1 &\text{ with probability } 1/2 \end{cases}
     \] 
     and $\chi_M(y) = -\chi_M(x)$.
    Then the expected discrepancy of $\chi_M$ is at most $  \sqrt{3\kappa \ln m} $.
\end{restatable}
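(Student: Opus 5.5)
Fix a range $S \in \S$ and write $\chi_M(S)=\sum_{x\in S}\chi_M(x)$. Every edge $\{x,y\}\in M$ with both endpoints in $S$ contributes $\chi_M(x)+\chi_M(y)=0$, and every edge with no endpoint in $S$ contributes nothing. Hence $\chi_M(S)$ is the sum of the signs $\pm 1$ coming from the edges of $M$ crossed by $S$, together with at most one extra term from the unmatched singleton when $|X|$ is odd. These crossing edges carry independent uniform signs, and there are at most $\kappa$ of them (plus possibly one). So $\chi_M(S)$ is a Rademacher sum of at most $\kappa+1$ terms.

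Hoeffding's inequality gives $\PP[|\chi_M(S)|\ge \lambda]\le 2\exp(-\lambda^2/(2\kappa))$. A union bound over the $m$ ranges then gives
\[
\PP\bigl[\max_S|\chi_M(S)|\ge\lambda\bigr]\le 2m\exp\bigl(-\lambda^2/(2\kappa)\bigr).
\]
To pass from the tail bound to an expectation bound, integrate the tail. Set $\lambda_0=\sqrt{2\kappa\ln(2m)}$ and write
\[
\EE\bigl[\max_S|\chi_M(S)|\bigr]\le \lambda_0+\int_{\lambda_0}^\infty 2m\, e^{-\lambda^2/(2\kappa)}\,d\lambda .
\]
The integral is at most $\kappa/\lambda_0=\sqrt{\kappa/(2\ln(2m))}$, by the standard Gaussian tail estimate.

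It remains to show that $\sqrt{2\kappa\ln(2m)}+\sqrt{\kappa/(2\ln 2m)}\le\sqrt{3\kappa\ln m}$, and this is where the hypothesis $m\ge 34$ enters. Dividing by $\sqrt{\kappa}$, the inequality is equivalent to
\[
\sqrt{2\ln(2m)}+\frac{1}{\sqrt{2\ln(2m)}}\le\sqrt{3\ln m}.
\]
At $m=34$ the left side is about $2.97+0.34=3.31$ and the right side is about $3.25$. This is slightly too weak, so the constants need care.

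The fix I would try is a sharper tail. Replace the two-sided Hoeffding bound by the moment-generating-function approach applied directly to the maximum. By Jensen,
\[
\exp\bigl(\lambda\,\EE[\max_S|\chi_M(S)|]\bigr)\le \EE\bigl[\exp(\lambda\max_S|\chi_M(S)|)\bigr]\le\sum_S\bigl(\EE e^{\lambda\chi_M(S)}+\EE e^{-\lambda\chi_M(S)}\bigr)\le 2m\,e^{\lambda^2\kappa/2}.
\]
Optimizing over $\lambda$ gives $\EE[\max_S|\chi_M(S)|]\le\sqrt{2\kappa\ln(2m)}$. The lemma then follows from $2\ln(2m)\le 3\ln m$, which is equivalent to $4\le m$.

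The singleton (odd $|X|$) adds at most $1$ to the variance proxy. I would handle it by letting $\kappa+1$ replace $\kappa$ and absorbing the extra term using the slack between $2\ln(2m)$ and $3\ln m$ when $m\ge 34$. Alternatively, one can assume $|X|$ is even, as the paper does for perfect matchings. Checking this constant bookkeeping is the only real obstacle.
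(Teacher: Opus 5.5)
Your final argument is correct, and it departs from the paper at the step that passes from per-range concentration to the expected maximum. The paper takes your first route. It writes $\chi_M(S)$ as a Rademacher sum over the at most $\kappa$ crossed edges, and Chernoff plus a union bound give the tail $2m e^{-\alpha^2/(2\kappa)}$. Integrating this tail, split at $\sqrt{2\kappa\ln(2m)}$, gives $\sqrt{2\kappa\ln(2m)}+\sqrt{\kappa/(2\ln(2m))}$.

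You discarded this route because of an arithmetic slip: $\sqrt{2\ln 68}\approx 2.905$, not $2.97$. So at $m=34$ the left side is about $3.249$, which is below $\sqrt{3\ln 34}\approx 3.253$. In fact $34$ is the smallest integer for which $\sqrt{2\ln(2m)}+1/\sqrt{2\ln(2m)}\le\sqrt{3\ln m}$ holds: it fails at $m=33$, and the difference increases with $m$. That is where the hypothesis $m\ge 34$ comes from.

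Your replacement is the standard soft-max argument. You apply Jensen to $\exp(\lambda\max_S|\chi_M(S)|)$, bound the maximum by a sum of $2m$ moment generating functions each at most $\cosh(\lambda)^{\kappa}\le e^{\lambda^2\kappa/2}$, and optimize $\lambda$. This avoids the additive tail term, gives the sharper bound $\sqrt{2\kappa\ln(2m)}$, and needs only $m\ge 4$ (the case $\kappa=0$ is trivial). The paper's route is longer and gives a worse constant and threshold. What it buys is an explicit high-probability bound on the discrepancy, which is the form mentioned in the remark following the lemma.

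One correction to your last paragraph: the singleton for odd $|X|$ cannot simply be absorbed into the slack. Replacing $\kappa$ by $\kappa+1$ would require $2(\kappa+1)\ln(2m)\le 3\kappa\ln m$, which fails for $\kappa=1$ for every $m$. Worse, if the singleton gets a random sign without being counted in $\kappa$, the claimed bound is false for $\kappa=0$. A range containing the singleton and crossing no edge then has discrepancy exactly $1$.

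So take your second option, that the lemma concerns a genuine perfect matching, i.e.\ even $|X|$. The paper's proof does this implicitly by writing $\chi_M(S)$ as a sum over crossed edges only. For odd $|X|$ and $\kappa\ge 1$ the bound can be recovered, but only by a case split that also uses the trivial estimate $|\chi_M(S)|\le\kappa+1$ for small $\kappa$.
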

\begin{remark*}
A `high probability version' of \Cref{lemma:matching-to-disc} is well-known \cite[Lemma 2.5]{MWW93} and implies the above bound through basic probabilistic calculations. For completeness, we include the proof of \Cref{lemma:matching-to-disc} at the end of this section.
\end{remark*}

\Cref{lemma:matching-to-disc} naturally yields an algorithm to compute a low-discrepancy coloring of any set system with bounded dual-shatter dimension $d$, see \nameref{discalgo} (\Cref{algo:main-discalgo-compressed}).

\smallskip

\begin{algorithm}[t]
\algotitle{\textsc{LowDiscColor}}{discalgo}
\caption{  \textsc{LowDiscColor}$\big((X,\S)\big)$}
    \label{algo:main-discalgo-compressed}

    $n \leftarrow |X|$

    $ \curly{e_1, \dots, e_{\floor{n/2}}} \leftarrow \nameref{GDalgo}\big((X,\S), \floor{ n/2}, 1/2 \big)$ 

    \For{ $i = 1, \dots, \floor{n/2}$}{ 

    $\{x_i,y_i\} \leftarrow \mathrm{endpoints}\round{e_i}$

    $\chi(x_i) = \begin{cases} 1 &\text{ with probability } 1/2 \\ -1 &\text{ with probability } 1/2\end{cases}$

    $\chi(y_i) = - \chi (x_i)$ \tcp*{we skip this step if $y_i = x_i$}

    }
    
    \textbf{return} $\chi$
    
\end{algorithm}

Importantly, constructing the coloring can be done very easily while we construct a matching, therefore the running time of \nameref{discalgo} and  \nameref{GDalgo} are precisely the same. Therefore, \Cref{cor:discrepancy} is an immediate consequence of \Cref{thm:mainVCmatchings} and \Cref{lemma:matching-to-disc}.

\begin{proof}[Proof of \Cref{lemma:matching-to-disc}]

Let $ S \in \S$ be a fixed range. We express the sum $\chi_M(S)$ of colors over elements of $S$ as 
    \[
        \chi_M(S) 
        = \sum\limits_{\{x,y\} \in M; x,y \in S} \round{\chi_M(x) + \chi_M(y)} + \sum\limits_{x \in \mathrm{cr}(S,M)} \chi_M(x) = \sum\limits_{x \in \mathrm{cr}(S,M)} \chi_M(x)~,
    \]
    where $\mathrm{cr}(S,M) = \{x\in S  : \{x,y\} \in M, y \notin S\}$.
    Since $\abs{\mathrm{cr}(S,M)}\leq \kappa$ for any $S \in \S$, $\disc(S,\chi_M)$ is a sum of at most $\kappa$ \emph{independent} random variables. 
    We use the following concentration bound from \cite{alonspencerprobabilistic}
    \begin{claim}[Theorem A.1.1 from \cite{alonspencerprobabilistic}]\label{claim:Alon-concentration}
        Let $X_1, \dots, X_k$ be independent $\{-1,1\}$-valued random variables with $\PP[X_i = -1] = \PP[X_i = 1] = 1/2$. Then for any $\alpha \geq 0$
            \[
                \PP\bracket{\abs{\sum_{i=1}^k X_i} > \alpha} \leq 2e^{-{\alpha^2}/{2k}}.
            \]
    \end{claim}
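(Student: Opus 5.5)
The plan is the classical exponential-moment (Chernoff) method, with a union over the two tails. Set $Z=\sum_{i=1}^k X_i$ and fix $\lambda>0$. Since $e^{\lambda z}$ is increasing and positive, Markov's inequality gives
\[
\PP\bracket{Z>\alpha}\leq \PP\bracket{e^{\lambda Z}\geq e^{\lambda\alpha}}\leq e^{-\lambda\alpha}\,\EE\bracket{e^{\lambda Z}}.
\]
By independence of the $X_i$, the moment generating function factorizes:
\[
\EE\bracket{e^{\lambda Z}}=\prod_{i=1}^k \EE\bracket{e^{\lambda X_i}}=\round{\cosh\lambda}^k .
\]

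The key elementary step is the bound $\cosh\lambda\leq e^{\lambda^2/2}$. I will prove it by comparing Taylor series term by term: $\cosh\lambda=\sum_{j\ge0}\lambda^{2j}/(2j)!$ and $e^{\lambda^2/2}=\sum_{j\ge0}\lambda^{2j}/(2^j j!)$. Every coefficient is nonnegative, so it suffices to have $(2j)!\geq 2^j j!$. This holds because $(2j)!/j!=\prod_{r=1}^{j}(j+r)$ is a product of $j$ factors, each at least $2$.

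Combining these gives $\PP\bracket{Z>\alpha}\leq \exp\round{k\lambda^2/2-\lambda\alpha}$. I then optimize over $\lambda$. The choice $\lambda=\alpha/k$ is valid for $\alpha>0$ and yields $\PP\bracket{Z>\alpha}\leq e^{-\alpha^2/2k}$. The case $\alpha=0$ needs no argument, since the right-hand side of the claim is then $2$.

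For the other tail, the variables $-X_i$ have the same distribution and are still independent. Hence $\PP\bracket{Z<-\alpha}=\PP\bracket{-Z>\alpha}\leq e^{-\alpha^2/2k}$. A union bound over the events $\{Z>\alpha\}$ and $\{Z<-\alpha\}$ then gives $\PP\bracket{\abs{Z}>\alpha}\leq 2e^{-\alpha^2/2k}$.

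There is no real obstacle here. The only non-mechanical step is the inequality $\cosh\lambda\leq e^{\lambda^2/2}$, and the coefficient comparison above handles it directly.
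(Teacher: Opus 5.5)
Your proof is correct and matches the standard argument behind the cited result. The paper gives no proof of its own and only cites Alon--Spencer, whose Theorem A.1.1 proves the one-sided bound the same way: the exponential-moment method, the Taylor-coefficient bound $\cosh\lambda\le e^{\lambda^2/2}$, and $\lambda=\alpha/k$. The two-sided version then follows by symmetry and a union bound, as in your last step.
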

    \noindent
    Applying \Cref{claim:Alon-concentration}, we get that for any fixed $S \in \S$ and $\alpha > 0$,
        \[
            \PP\bracket{\abs{\chi_M(S)}  > \alpha}  \leq 2e^{-{\alpha^2}/{2\kappa}}.
        \]
    By the union bound, we get
        \[
            \PP \bracket{ \disc_\S(\chi_M) > \alpha } = \PP \bracket{\max\limits_{S \in \S} \abs{\chi_M(S)}> \alpha } \leq m\cdot 2e^{-{\alpha^2}/{2\kappa}}.
        \]
    Finally, we bound the expected discrepancy by applying Fubini's theorem
        \begin{align*}
            &\EE\bracket{ \disc_\S(\chi_M) }
            \overset{\text{def}}{=}
            \int\limits_{0}^\infty \PP \bracket{ \disc_\S(\chi_M) > \alpha } d\alpha
            \leq
            \int\limits_{0}^\infty \min\left\{2m\cdot e^{-{\alpha^2}/{2\kappa}}, ~ 1 \right\} d\alpha\\
            &=
            \int\limits_{0}^{\sqrt{2\kappa\ln(2m)}} 1 d\alpha
            +
            \int\limits_{\sqrt{2\kappa\ln(2m)}}^\infty 2m\cdot e^{-{\alpha^2}/{2\kappa}} d\alpha\\
            &=
            \sqrt{2\kappa\ln(2m)}
            +
            2m\sqrt{2\kappa}\int\limits_{\sqrt{\ln(2m)}}^\infty  e^{-t^2}  ~dt\\
            &=
            \sqrt{2\kappa\ln(2m)}
            +
            2m\sqrt{2\kappa}\int\limits_{\sqrt{\ln(2m)}}^\infty  \frac{t}{t}\cdot e^{-t^2}  ~dt\\
            &\leq
            \sqrt{2\kappa\ln(2m)}
            +
            2m\sqrt{\frac{2\kappa}{\ln(2m)}}\int\limits_{\sqrt{\ln(2m)}}^\infty  te^{-t^2}  ~dt\\
            &=
            \sqrt{2\kappa\ln(2m)}
            +
            2m\sqrt{\frac{2\kappa}{\ln(2m)}}\bracket{  -\frac{e^{-t^2}}{2}}_{\sqrt{\ln(2m)}}^\infty \\
            &=
            \sqrt{2\kappa\ln(2m)}
            +
            \sqrt{\frac{\kappa}{2\ln(2m)}}
            \leq
            \sqrt{3\kappa\ln m},
        \end{align*}
        if $m \geq 34$. This concludes the proof of \Cref{lemma:matching-to-disc}.
\end{proof}

\subsection{Proof of Corollary \ref{cor:GT-neigh}}

We restate and prove~\Cref{cor:GT-neigh} in a self-contained manner, taken from \cite{DEMMPT23}.

\begin{corollary} Let $G = (V, E)$ be a graph on $n$ vertices, and $\eps > 0$ be a parameter such that
the primal shatter function for the set system $\left(V, \left\{N_G[u] \colon u \in V \right\}\right)$
is $O \left( n^{1+\eps} \right)$. Then there exists a family $\K$ of
subsets of $V$ such that
  \begin{enumerate}
    \item[a)] for each $u \in V$, there exists an $F \in \K$ such that $N_G[u] \subseteq F$, 
    \item[b)] the diameter, w.r.t. distances in $G$, of each set of $\K$ is at most $4$, and
    \item[c)] each $u \in V$ lies in $\tilde{O} \left( n^{\eps} \right)$ sets of $\K$.
  \end{enumerate}
The family $\K$ can be computed in time $\tilde{O} \left( n^2 \right)$.
\end{corollary}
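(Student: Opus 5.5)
We follow the approach of \cite{DEMMPT23}: build a low-crossing spanning path of $V$ with respect to the closed neighborhoods, cut it into intervals, and take each set $F$ to be the union of closed neighborhoods of the vertices in a window around an interval.

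\textbf{Step 1: the dual shatter function.} The set system $(V,\mathcal N)$ with $\mathcal N=\{N_G[u]\colon u\in V\}$ is self-dual: $v\in N_G[u]$ iff $u\in N_G[v]$. Hence the dual shatter function of $(V,\mathcal N)$ equals its primal shatter function, which is $O(k^{1+\eps})$ by assumption.

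\textbf{Step 2: the spanning path.} Apply \Cref{cor:spanpath} with $d=1+\eps$, $m=n$ and $O(1)$ membership queries; with an adjacency matrix these are immediate. The corollary gives a spanning path $\pi=(v_1,\dots,v_n)$ whose crossing number with respect to $\mathcal N$ is $\tilde{O}(n^{1-1/(1+\eps)})=\tilde{O}(n^{\eps/(1+\eps)})\le\tilde O(n^\eps)$. The expected time is $\tilde{O}(mn+n^2)=\tilde{O}(n^2)$. Each $N_G[u]$ is therefore a union of $\tilde O(n^\eps)$ intervals of $\pi$, since the number of maximal intervals is at most one plus the number of crossed edges.

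\textbf{Step 3: building $\K$.} We would proceed as in \cite{DEMMPT23}, checking only that each step costs $\tilde O(n^2)$. The natural choice, to be verified against their construction, is the following. Greedily cut $\pi$ into consecutive blocks. For each vertex $u$, fix one intersection point with $\pi$, for instance the leftmost element $\ell(u)$ of $N_G[u]$ in the path order. For each block $B$, let $F_B=\bigcup\{N_G[u]\colon \ell(u)\in B\}$, restricted by the standard distance-2 trimming of \cite{DEMMPT23} so that $F_B$ lies within distance $2$ of a single center. This center condition gives diameter at most $4$.

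\textbf{Step 4: verifying the properties.} Property a) is immediate, because $N_G[u]\subseteq F_B$ for the block $B$ containing $\ell(u)$. Property c) is where the crossing number enters. If $v\in F_B$, then some $u$ with $\ell(u)\in B$ has $v\in N_G[u]$, equivalently $u\in N_G[v]$. So $N_G[v]$ meets the vertices $u$ charged to $B$, and the bound reduces to counting the intervals of $N_G[v]$ along $\pi$, which is $\tilde O(n^\eps)$. Making this charging precise, so that each interval of $N_G[v]$ is charged to $O(1)$ blocks, is the main obstacle. We would try to choose the blocks so that each block is entered and left by intervals of $N_G[v]$ in a controlled way. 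For this we would import the combinatorial argument of \cite[proof of the cover theorem]{DEMMPT23} verbatim, since it uses only the interval structure.

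\textbf{Step 5: running time.} Computing $\ell(u)$ for all $u$ takes $O(n^2)$ by scanning. Forming the sets $F_B$ takes time proportional to $\sum_B |F_B| = \tilde O(n^{1+\eps})$ plus adjacency scans, for $O(n^2)$ in total. The spanning path dominates, giving overall expected time $\tilde{O}(n^2)$.
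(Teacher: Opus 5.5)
There is a genuine gap in your Steps 3--4. Steps 1--2 match the paper, but the construction in Step 3 does not work. The problem is the choice $F_B=\bigcup\{N_G[u]\colon \ell(u)\in B\}$.

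\textbf{Diameter.} A vertex $u$ with $\ell(u)\in B$ is a neighbor of $B$, so $F_B$ lies in the distance-$2$ neighborhood of $B$. Even if $B$ sits inside a single $N_G[w]$, which your unspecified greedy cut does not guarantee, you only get distance $3$ to $w$. Concretely, take the path graph $x\,u\,b\,w\,b'\,u'\,x'$ with spanning order $b,w,b',x,u,u',x'$ and first block $B=\{b,w,b'\}=N_G[w]$. Then $\ell(u)=b$ and $\ell(u')=b'$, so $x,x'\in F_B$ with $d_G(x,x')=6$.

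\textbf{The trimming.} The ``distance-$2$ trimming'' you invoke is not available. Deleting $x$ from $F_B$ destroys property a) for $u$, whose neighborhood is assigned only to $B$.

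\textbf{Property c).} By symmetry, $v\in F_B$ iff $N_G[v]$ contains some $u$ with $\ell(u)\in B$. The map $u\mapsto \ell(u)$ does not respect the path order, so $\{u:\ell(u)\in B\}$ is not an interval of $\pi$. Hence the number of such $B$ is not controlled by the number of intervals of $N_G[v]$. Your claimed reduction is therefore invalid, and you yourself leave this charging open.

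\textbf{The missing idea.} Index each set by the vertices \emph{in} its block, and take the blocks from the neighborhood intervals themselves. This is what the paper does.
\begin{itemize}
\item Let $\mathcal{I}=\bigcup_w \mathcal{I}_w$ be the maximal intervals of all $N_G[w]$ along $\pi$.
\item Greedily pick $I_1,I_2,\ldots\in\mathcal{I}$, where $I_{j+1}$ is an interval containing the first uncovered vertex that reaches furthest to the right.
\item Put $F_I=\bigcup_{v\in I}N_G[v]$.
\end{itemize}
The three properties then follow:
\begin{itemize}
\item a) holds because the chosen intervals cover $V$.
\item b) holds because $I\subseteq N_G[w]$ for some $w$, so every vertex of $F_I$ is within distance $2$ of $w$.
\item For c), by self-duality $x\in F_I$ iff $N_G[x]\cap I\neq\emptyset$, so membership is now governed by the intervals of $N_G[x]$. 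Greedy maximality gives two facts. First, $I_{j+2}$ starts after the vertex following $I_j$, so each vertex lies in at most two chosen intervals. Second, no $J\in\mathcal{I}_x$ meets both $I_j$ and some $I_\ell$ with $\ell\ge j+3$: such a $J$ would contain the first vertex after $I_j$ and extend beyond $I_{j+1}$, contradicting the choice of $I_{j+1}$. Hence $x$ lies in at most $3|\mathcal{I}_x|=\tilde O(n^{\varepsilon})$ sets.
\end{itemize}

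\textbf{Running time.} Since each vertex is in at most two chosen intervals, building all $F_I$ costs $O(n\sum_I |I|)=O(n^2)$. The greedy cover can be found in $O(n^2)$ by scanning $\pi$ while maintaining the list of vertices whose neighborhood still contains the current stretch.
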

\begin{proof}
We first give the construction and its running time. Then we prove that it satisfies
the required three properties. For adjacency-list input, first build an adjacency matrix in $O(n^2)$ time,
so that closed-neighborhood membership tests take constant time.
\begin{description}
\item[Step 1.]
As the neighborhood set system is self-dual, we can
apply \Cref{cor:spanpath} to get a low-crossing spanning path of $V$, which gives an ordering $v_1, \ldots, v_n$
of $V$ such that for each $u \in V$, the set $N_G[u]$ can be written as a union
of $\aO \left( n^{1-1/(1+\eps)} \right) = \aO \left( n^{\eps} \right)$ intervals of this ordering.
Let $\I_u$ be the set of these intervals for $u$, and let $\I = \bigcup_u \I_u$.

As there are $n$ ranges $N_G[u]$, \Cref{cor:spanpath} takes  time $\tilde{O} \left(n^2\right)$.
We do not construct $\I$ explicitly.

\item[Step 2.]
Construct  a greedy covering $\I' \subseteq \I$ as follows: add to $\I'$ an interval $I_1 \in \I$ containing $v_1$ and   covering
the maximum number of other elements of $V$. Say $I_1$ ends at $v_{i_1}$. Next add to $\I'$ an interval $I_2 \in \I$ that covers $v_{i_1+1}$
and the maximum number of remaining uncovered elements of $V$. And so on from left to right until all elements of $V$ are covered.

This can be done in time $O \left(n^2 \right)$ as follows: make a list $L$ of all vertices of $V$ that contain $v_1$ in their closed neighborhood.
Then for the next vertex $v_2$, throw away the vertices in $L$ that do not contain $v_2$ in their closed neighborhood. Continue as long as $L$ is non-empty.
The vertex that stays the longest in $L$ gives the interval $I_1$. If there are several such vertices, we take an arbitrary one. We can continue similarly from the next uncovered vertex. Processing each
vertex in the ordering takes $O(n)$ time, so in total we can construct $\I'$ in time $O \left( n^2 \right)$. Note that if $v_{i_j}$ is the right endpoint of $I_j$, then $I_{j+2}$
starts after $v_{i_j+1}$; otherwise it would have been a better
choice than $I_{j+1}$. Hence each vertex lies in at most two
intervals of $\I'$.

\item[Step 3.] For each $I \in \I'$, add the following one set to $\K$: $F_I = \bigcup_{v \in I} N_G[v]$.

For each $I \in \I'$, constructing $F_I$ takes time $O(|I| \, n)$. The total time is $O( n \sum_{I \in \I'} |I| ) = O \left(n^2 \right)$, as  each vertex of $V$ lies in at most two intervals of $\I'$.
\end{description}

Property a) follows since we include $N_G[u]$ for each $u \in V$ in at least one set of $\K$,
as the intervals in $\I'$ cover $V$.

Property b) is due to the fact that from any $u \in F_I$,
we can go to a vertex in $I$ {\color{myred}in at most one step},
and then to the common vertex that defined the interval $I$
{\color{myred}in at most one more step.
Thus any two vertices of $F_I$ are at distance at most $4$}.

Property c). Fix any $u \in V$.
Due to the greedy construction of $\I'$, an interval in $\I_u$
can only intersect at most {\color{myred}three} intervals in $\I'$.
{\color{myred}
Indeed, if $J \in \I_u$ met both $I_j$ and $I_\ell$
with $\ell \geq j+3$, it would contain $v_{i_j+1}$
and end after $I_{j+1}$, contradicting the choice of $I_{j+1}$.
Since $u \in F_I$ if and only if $I \cap N_G[u] \neq \emptyset$,
}
$u$ can be contained in at most {\color{myred}three} $F_I$ sets
for each interval of $\I_u$.
As $|\I_u|= {\color{myred}\tilde{O}} \left(n^{\eps} \right)$,
the proof follows.

\end{proof}

\subsection{Proof of Corollary \ref{cor:balls}}
Test-sets are small-sized subfamilies (of half-spaces and balls) such that if a matching has low crossing number with respect to this subfamily, then it is guaranteed to have low crossing number with respect to any member of the family (of half-spaces and balls respectively).
It is well known that for set systems induced by half-spaces, one can compute a linear-sized test set in linear time~\cite{Mat92}. As it was observed recently~\cite{CsM21}, one can obtain test-sets for balls via a simple lifting argument, yielding the following guarantee.

\begin{lemma}\cite[Lemma 17]{CsM21}
\label{lemma:test-set-balls}
    Let $X$ be a set of $n$ points in $\RR^d$. There exists a set $\Q_X$ of at most $(d+2)n^{1+1/d}$ balls such that  if a perfect matching of $X$ has crossing number $\kappa$ with respect to $\Q_X$, then its crossing number with respect to any ball in $\RR^d$ is at most $(d+2)\kappa + 6(d+1)^2n^{1-1/d}$.
\end{lemma}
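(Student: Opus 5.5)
The test-set statement in \Cref{lemma:test-set-balls} comes from \cite{CsM21}, so I would reprove it by the same lifting argument. The idea is to reduce crossings of an arbitrary ball to crossings of a small family of balls that are ``canonical'' with respect to $X$.

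First, lift each $x\in X$ to $\hat x=(x,\|x\|^2)\in\RR^{d+1}$. A ball $B$ in $\RR^d$ then corresponds to a lower half-space $h_B$ in $\RR^{d+1}$, with $x\in B$ iff $\hat x\in h_B$. Crossings are preserved: an edge $\{x,y\}$ is crossed by $B$ iff $\{\hat x,\hat y\}$ is crossed by the hyperplane $\partial h_B$.

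Second, I would follow Matou\v{s}ek's test-set construction for half-spaces \cite{Mat92}, but restricted to hyperplanes of the form $\partial h_B$. I take a $(1/r)$-cutting of the dual arrangement of the lifted points, with $r=n^{1/d}$. Its restriction to the $d$-dimensional family of ``ball'' hyperplanes (a paraboloid-tangent parametrization) gives $O(r^d)=O(n)$ cells. From each cell I pick the $d+2$ vertices, or alternatively $(d+2)n^{1/d}$ representative balls. This should produce a family $\Q_X$ of size $(d+2)n^{1+1/d}$, and I would aim to match that count exactly.

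Third, the charging step. Any ball $B$ lies in some cell $\Delta$. The set $B\cap X$ differs from the ranges of the representative balls of $\Delta$ only on the points whose dual hyperplanes cross $\Delta$, and there are at most $n/r$ such points. Concretely, I would write $h_B\cap\hat X$ as sandwiched between the intersection and the union of the $d+2$ representative half-spaces. Then an edge crossed by $B$ is either crossed by one of the $d+2$ representatives, giving the $(d+2)\kappa$ term, or has an endpoint in the conflict list of $\Delta$. The second kind are charged to points of the conflict list, giving the $6(d+1)^2 n^{1-1/d}$ term once the cutting constants are tracked.

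\textbf{Main obstacle.} I expect the main obstacle to be the sandwich step: showing that every $B$ in a cell is covered by the Boolean combination of the vertex-balls, up to the conflict list. My first attempt would use convexity of the cell in the parameter space and Carath\'eodory's theorem with $d+2$ points, since the parameter space has dimension $d+1$.
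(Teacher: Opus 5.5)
Your route (lift to the paraboloid, then run Matou\v{s}ek's cutting-based test-set construction for half-spaces \cite{Mat92}) is the intended one; it is the ``simple lifting argument'' behind \cite[Lemma 17]{CsM21}, which the paper only cites. However, your second step has a genuine error in the dimension count, and your size bound is reverse-engineered rather than derived. Balls in $\RR^d$ form a $(d+1)$-parameter family (center and radius). Under $x\mapsto(x,\|x\|^2)$ they correspond to the lower half-spaces of \emph{all} non-vertical hyperplanes of $\RR^{d+1}$. The paraboloid-tangent hyperplanes are only the radius-$0$ balls. So there is no $d$-dimensional family to restrict to. The dual point of a generic ball lies in no cell of such a restricted cutting, and your charging step (``any ball $B$ lies in some cell $\Delta$'') fails. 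The cutting must be taken in the full $(d+1)$-dimensional dual space, where a $(1/r)$-cutting has $\Theta(r^{d+1})$ simplices. With $r=\Theta(n^{1/d})$ this is $\Theta(n^{1+1/d})$, and that is where the size $(d+2)n^{1+1/d}$ comes from: $d+2$ vertices per simplex, i.e.\ the half-space test-set lemma in dimension $d+1$, which is also consistent with the $(d+2)\kappa$ term. The constant $6(d+1)^2$ depends on the cutting lemma used. Your ``$O(n)$ cells, each with $(d+2)n^{1/d}$ representative balls'' has no construction behind it. It also contradicts your own later use of $d+2$ vertices and a $(d+1)$-dimensional parameter space.

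The step you flag as the main obstacle is routine and needs no Carath\'eodory. For $q\in X$, the function $f_q(a,b)=\langle a,q\rangle+b-\|q\|^2$ is affine on the dual space. The point $q$ lies in the ball dual to $u=(a,b)$ iff $f_q(u)\ge 0$; vertices with negative squared radius give empty ranges and can be dropped. Write $h_B^*=\sum_i\lambda_i v_i$ as a convex combination of the vertices of its simplex $\Delta$. Take an edge $\{x,y\}$ with $x\in B$, $y\notin B$, and neither endpoint in the conflict list of $\Delta$. From $f_y(h_B^*)<0$, some $v_i$ with $\lambda_i>0$ has $f_y(v_i)<0$. Since $f_x$ has constant sign on $\Delta$ and $f_x(h_B^*)\ge 0$, we get $f_x(v_j)\ge 0$ for every $j$ with $\lambda_j>0$. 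Hence the ball of $v_i$ crosses the edge, which gives crossing number at most $(d+2)\kappa+n/r$. Your sandwich formulation (``$B\cap X$ agrees with each representative off the conflict list'') is false in degenerate cases where a point lies on $\partial B$. That is exactly why the convex-combination argument over all $d+2$ vertices is used.
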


Applying \Cref{thm:mainVCmatchings} to the set system $(X,\Q_X)$ provided by \Cref{lemma:test-set-balls}, we conclude \Cref{cor:balls}.

\subsection{Proof of Corollary \ref{thm:sublinear-delta}}
We prove a more general statement on $t$-partitions which implies \Cref{thm:sublinear-delta} by substituting $t = \floor{n/2}$.
\begin{restatable}{theorem}{sublineardelta} \label{thm:sublinear-delta-partitions}
Given a set system $(X,\S)$ with dual-shatter function $\pi^*_{\S}(k)=O(k^d)$, a positive integer $t\leq n$, and a parameter $\delta \in (0,1)$, we can find a balanced $t$-partition $\{P_1, \dots, P_t\}$ that has expected
 crossing number 
    \[
        \aO\round{ \frac{n^{2}}{t^{1+1/d-\delta}}}
    \]
    in expected time 
    $
    \aO\round{
        \textcolor{myred}{nt_1+nt_2}+mt_1t_2^{1/d}
        +\frac{mnt_2}{t_1^{1/d}}
    }
    $, where $t_1=t^{1-\delta}$ and $t_2=t/t_1$.
\end{restatable}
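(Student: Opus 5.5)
We use a two-level hierarchical scheme: a coarse partition built on a uniform sample, followed by fine partitions of each coarse class.

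\textbf{Coarse level.} Draw a uniform random sample $\S'\subseteq\S$ of size about $m/t_1^{1/d}$; equivalently, sample each range independently with probability $p_1\approx t_1^{-1/d}$. Apply \nameref{GDalgo}$\big((X,\S'),t_1,p\big)$ with $p=t_1/n$. By \Cref{thm:main}, this gives a balanced $t_1$-partition $Q_1,\dots,Q_{t_1}$ of $X$. Its crossing number with respect to $\S'$ is $O\big((n/t_1^{1/d})\ln(mn)\big)$. The running time is $O(t_1(m'+n))$ with $m'=|\S'|$.

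We then transfer the bound to all of $\S$. Any fixed range that crosses many classes is caught by the sample with good probability. Alternatively, we can use $\S$ directly at this level: that costs $O(t_1(m+n))$, which is dominated by the claimed $mt_1t_2^{1/d}$ term. I would do the latter to avoid a sampling argument. In either case every $S\in\S$ crosses $\aO(n/t_1^{1/d})$ coarse classes.

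\textbf{Fine level.} For each coarse class $Q_j$, of size about $n/t_1$, call \nameref{GDalgo}$\big((Q_j,\S|_{Q_j}),t_2,p_2\big)$. This yields $t_2$ parts of size about $n/t$, so together they form a balanced $t$-partition.

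The crossing count is obtained by charging. A range $S$ can cross a fine part of $Q_j$ only if $S$ crosses $Q_j$ itself; if $S$ contains $Q_j$ or is disjoint from it, it crosses no part of $Q_j$. Inside each crossed class, \Cref{thm:main} with $p_2=t_2/|Q_j|$ gives at most $O(t_2^{1-1/d}\cdot\ln(mn)\cdot(\ldots))$ crossings. The crude bound of $t_2$ crossings per class would also do. Multiplying by the $\aO(n/t_1^{1/d})$ crossed classes gives
\[
\aO\Big(\frac{n}{t_1^{1/d}}\, t_2\Big)=\aO\Big(\frac{n\,t^{\delta}}{t^{(1-\delta)/d}}\Big).
\]
The stated bound $n^2/t^{1+1/d-\delta}$ equals $(n/t)\cdot n\,t^{\delta-1/d}$, which is consistent up to the slack from the $(n/t)$ factors. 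I would double-check the exponents here; they may require the per-class parameter choice $p$ rather than the crude bound.

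Note that only ranges crossing $Q_j$ matter inside $Q_j$. The expectation is handled by linearity over the max-per-subroutine decomposition, as in the main proof.

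\textbf{Running time.} For each $j$ we run \nameref{GDalgo} only on the ranges that cross $Q_j$. Determining which ranges cross $Q_j$ costs $O(m\cdot n/t_1)$ per class in the naive approach, which is too expensive. Instead we use the coarse run itself: the weight updates there already certify crossings. The main obstacle is bounding $\sum_j m_j$, where $m_j$ is the number of ranges crossing $Q_j$. The coarse crossing number gives $\sum_j m_j\le m\cdot\aO(n/t_1^{1/d})$. The fine runs then cost
\[
\sum_j O\big(m_j\,|Q_j|\,p+t_2(m_j+n/t_1)\big).
\]
With $p$ tuned, this yields the $mnt_2/t_1^{1/d}$ term and the $nt_2$ term. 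The coarse run contributes the $nt_1+mt_1t_2^{1/d}$ terms; the factor $t_2^{1/d}$ comes from choosing its $p$ larger, so that its $\ln(mt)/p$ term stays small.

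I expect the delicate step to be this bookkeeping: identifying the crossing ranges for each class without paying $mn$, and choosing $p$ at both levels to match the exact stated exponents.
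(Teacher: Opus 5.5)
Your two-level skeleton is the same as the paper's: a coarse $t_1$-partition via \Cref{thm:main} with $p=t_1/n$ on all of $\mathcal S$, then a $t_2$-partition of each class with $p=t_2/|Q_j|$. However, the step you defer as ``bookkeeping'' is a genuine gap, and your proposed fix does not work. The coarse run's weight updates do not certify crossings. In \textsc{SubAlgo}, crossing with $P_i$ is tested only for ranges in the random sample $\mathcal S_i$, each kept with probability $p=t_1/n$. So the coarse run observes only about a $p$-fraction of the pairs $(S,Q_j)$ with $S$ crossing $Q_j$. Raising $p$ to $1$ to see them all costs $\Theta(mn)$, which is exactly what you are trying to avoid. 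In fact, certifying that $S$ does not cross $Q_j$ (e.g.\ $Q_j\subseteq S$) requires querying every element of $Q_j$, so computing $\mathcal S|_{Q_j}$ exactly is hopeless.

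The missing idea is that you do not need all crossing ranges. Any range crosses at most $\rho(S,Q_j)=\min\{|S\cap Q_j|,|Q_j\setminus S|\}$ parts of any partition of $Q_j$. Hence ranges with $\rho(S,Q_j)\le |Q_j|/t_2^{1/d}$ meet the per-class bound for free. Every remaining ``large'' range crosses a uniform random sample $N_j\subseteq Q_j$ of size $\Theta(t_2^{1/d}\log t_2)$, since such a sample is w.h.p.\ a $t_2^{-1/d}$-net for $(Q_j,\mathcal S\cup\bar{\mathcal S})$. Testing all $m$ ranges against $N_j$ costs $\tilde O(mt_2^{1/d})$ per class, i.e.\ $\tilde O(mt_1t_2^{1/d})$ in total. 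That is the true origin of this term. It does not come from a larger $p$ in the coarse run, which would not improve the coarse crossing bound beyond log factors anyway, because the $(n/t_1)\ln n$ term remains. One then runs \textsc{Partition} on $\widehat{\mathcal L}_j=\{S : S\text{ crosses }N_j\}$. These ranges all cross $Q_j$, so $\sum_j|\widehat{\mathcal L}_j|\le m\cdot\tilde O(n/t_1^{1/d})$, and your accounting then gives the $nt_2+mnt_2/t_1^{1/d}$ terms.

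Your crossing count also does not close as written. The crude bound of $t_2$ crossings per crossed class gives $nt_2/t_1^{1/d}$. This exceeds the target $n^2/t^{1+1/d-\delta}$ by a factor $t^{1+\delta/d}/n$, which is $\Theta(t^{\delta/d})$ in the main case $t=\Theta(n)$, so it is not ``consistent up to slack''. You need the per-class bound $\tilde O(|Q_j|/t_2^{1/d})=\tilde O\bigl(n/(t_1t_2^{1/d})\bigr)$. It comes from \Cref{thm:main} for ranges in $\widehat{\mathcal L}_j$ and from the trivial $\rho$-bound for the rest. Multiplying by the $\tilde O(n/t_1^{1/d})$ crossed classes gives $n^2/(t_1^{1+1/d}t_2^{1/d})=n^2/t^{1-\delta+1/d}$, as required.
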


\textbf{Proof.} 
We will apply a divide-and-conquer approach to construct a partition. 

First, we construct a balanced $t_1$-partition $X = X_1 \cup \dots \cup X_{t_1}$ using \Cref{thm:main} with sampling probability $t_1/n$. Then the idea is that for each $X_i$, we can apply \Cref{thm:main} again to construct a balanced $t_2$-partition $X_i = Y^i_1 \cup \dots \cup Y^i_{t_2}$ of $(X_i,\S|_{X_i})$, where $\S|_{X_i}$ denotes the set of those ranges in $\S$ that cross $X_i$. This then gives a partition $\mathcal Y = \curly{Y^j_k ~|~ (j,k) \in [1,t_1]\times [1,t_2]}$ with $t_1\cdot t_2 = t$ parts. By \Cref{thm:main}, any range $S \in \S$ crosses $\aO\round{n/t_1^{1/d}}$ parts among $X_1, \dots, X_{t_1}$
and 
\begin{equation}\label{eq:crossing-bound-perpart-2ndphase}
\text{within a part $X_i$, ~$S$ crosses at most $\aO\round{|X_i|/t_2^{1/d}}$ parts of $\Y$.}
\end{equation}
As Property~\eqref{eq:crossing-bound-perpart-2ndphase} holds for every $S\in\S$, the expected value of the total crossing number is
\begin{equation}\label{eq:2phase-total-crossing}
    \aO\round{
        \frac{n}{t_1^{1/d}} \cdot \frac{n/t_1}{t_2^{1/d}} 
    }
    =
    \aO\round{
        \frac{n^{2}}{t^{1-\delta+1/d}}
    }.
\end{equation}
Algorithmically, the above procedure is too costly, as constructing the sets of ranges $\S|_{X_i}$ for each $i$ takes $O(mn)$ time overall. Instead, observe that we do not need to keep track of `small' ranges in $\S|_{X_i}$ as the crossing number of any range $S \in \S$ with respect to \emph{any} partition of $X_i$ is at most $\rho(S,X_i) := \min\curly{\abs{S\cap{X_i}},|\overline{S}\cap{X_i}|}$. In particular, if $\rho(S,X_i) \leq |X_i|/t_2^{1/d}$, then Property~\eqref{eq:crossing-bound-perpart-2ndphase} is trivially satisfied for $S$. 

Therefore, after we have created a balanced $t_1$-partition, for each $i\in[1,t_1]$, we call \nameref{algo:GD}$\big((X_i,\mathcal L(\S, X_i)), t_2, t_2/|X_i| \big)$, where
\[
    \mathcal L(\S, X_i):=\curly{ S \in \S ~:~ \rho(S,X_i) > |X_i|/t_2^{1/d}}.
\]
Note that $\mathcal L(\S, X_i)$ does not need to store $S\cap X_i$, only (a pointer to) the range $S$. By the above discussion, the partition constructed this way still has the expected crossing number stated in \Cref{eq:2phase-total-crossing}.


By \Cref{thm:main}, it takes $O\round{t_1(m+n) } = O\round{t^{1-\delta}(m+n)}$ time to construct the first partition. After this, we can recover every range in $\mathcal L(\S, X_i)$ using an $\eps$-net for the set system $\round{X_i, \S\cup\bar{\S}}$ with parameter $\eps = t_2^{-1/d}$.
 As $\S$ has dual shatter dimension bounded by a function of $d$, the VC-dimension of $\S\cup\bar{\S}$ is also bounded by a function of $d$, and so a uniform random sample $N_i$ of size
\[
\Theta \round{t_2^{1/d}\cdot \log \round{t_2^{1/d}}}
\]
is an $\eps$-net with high probability (the constants in $\Theta \left(\cdot\right)$ depend on $d$).

Now observe that if $N_i$ is a $t_2^{-1/d}$-net of $\round{X_i, \S\cup\bar{\S}}$, then $S \in \mathcal L(\S,X_i)$ only if it crosses $N_i$. Therefore, we can call partition on the set  $\widehat{\mathcal L}_i:=
    \{S\in\S:S\text{ crosses }N_i\}$. The advantage is that it is easier to find the elements of $\widehat{\mathcal L}_i$. Indeed it is sufficient to test for each $S \in \S$ and each $x \in N_i$ whether $x \in S$ which takes $\aO\round{t_2^{1/d}m}=\aO\round{t^{\delta/d}m}$ time for one part $X_i$, and $\aO\round{t_1t^{\delta/d}m}=\aO\round{t^{1-\delta+\delta/d}m}$ time in total.
For each $i \in [1,t_1]$, let $m_i = \abs{\widehat{\mathcal L}_i}$.
By \Cref{thm:main}, it takes 
\[
    O\round{
        \sum\limits_{i=1}^{t_1} t_2(m_i+|X_i|) 
    }
    =
    O\round{nt^{\delta} +
       t^{\delta}\cdot \sum\limits_{i=1}^{t_1} m_i 
    }
\]
time to build all the parts in $\Y$.
Recall that by construction, any range in $\S$ crosses $\aO\round{n/t_1^{1/d}}$ parts of the first partition, which implies 
\[
     t^{\delta} \cdot \sum\limits_{i=1}^{t_1} m_i 
    =
    \aO\round{ t^{\delta} \cdot m \cdot \frac{n}{t_1^{1/d}}} 
    =
    \aO\round{ m \cdot \frac{nt^{\delta}}{t^{\frac{1-\delta}{d}}}}.
\]
Therefore, the expected total  running time can be bounded by
$$
\aO\round{
        \textcolor{myred}{nt_1+nt_2}+mt_1t_2^{1/d}
        +\frac{mnt_2}{t_1^{1/d}}
    }.
$$

\qed

\newpage

\section{Automated market makers for prediction markets}\label{sec:Apx-AMM}
In this section we give precise definitions and prove \Cref{cor:AMM,cor:RQRU}. We follow the exposition and techniques of \cite{HWY25}.

\subsection{Preliminaries}

In a prediction market, traders can buy and sell bets (formally called securities) that pay out based on the outcomes of future events. These markets can be modelled with a set system $(X,\S)$, where $X$ denotes the set of possible outcomes and $\S$ contains the \emph{combinations} of outcomes on which traders can place bets. Given a set system and a market state, an automated market maker's (AMM) task is to determine the prices of (combinations of) securities using a cost function and update the prices and market states after each trade.

For example, \textsc{top $L$ candidates}~\cite{HWY25} allows traders to bet on the outcome of top $L$ candidates among $K \geq L$ candidates. Given a set $H \subset [K]$ with $|H| = L$, a top $L$ security for $H$ is the set of permutations where the top $L$ candidates are from $H$. There are a total of $M:= {K \choose L}$ top $L$ securities and $n := K!$ possible outcomes.

Following \cite{HWY25}, we consider automated market makers that determine security prices using a potential function $C: \RR^{|X|} \to \RR$, called a \emph{cost function}. The market state is specified by a vector $\mathbf w \in \RR^{|X|}$, where $w(x)$ denotes the number of shares of security sold by the AMM for the outcome $x \in X$. 

\begin{definition}
Given a set system $(X,\S)$ and a cost function $C: \RR^{|X|} \to \RR$, an AMM for a prediction market with cost function $C$ on $(X,\S)$ securities takes an initial state $\mathbf w^{(0)}  X \to \RR$, offers securities for all $S \in \S$, and supports a sequence of price, cost, and buy operations taking one of the following forms: for any set $ S \in \S$, shares $s \in \RR$, and state $\mathbf w$,
\begin{description}
    \item[Price] operation returns the current price of security for $S$, that is, $\mathrm{price}_C(S) = \sum\limits_{x \in S} \frac{\partial}{\partial \mathbf w(x)} C(\mathbf w)$;
    \item[Cost] operation returns the current cost of $s$ shares of security for $S$, that is,
    \[
        \mathrm{cost}_C(S,s) = C(\mathbf w + s \mathbf 1_S) - C(\mathbf w),
        \text{ where $\mathbf 1_S$ is the indicator vector of $S$;}
    \]
    \item[Buy] operation $\mathrm{buy}_C (S, s)$ updates the state $\mathbf w \leftarrow \mathbf w + s \mathbf 1_S$.
\end{description}
\end{definition}


In particular, we consider two examples of cost functions:

\begin{description}
\item[The logarithmic market scoring rule (LMSR)] proposed by Hanson~\cite{Hanson03} is a popular cost-function-based market making mechanism. It uses the logarithmic scoring rule to interact with a sequence of traders to trade securities and maintain a probability distribution over an outcome space $X$ of $n$ points in an online manner.
\[
    C_L(\mathbf w) = b \ln \round{\sum\limits_{x \in X} \exp\round{\frac{\mathbf w(x)}{b}}},
\]
where $b > 0$ is the liquidity parameter. The market designer can choose $b$ to control how fast the price
moves in response to trading and limit the worst-case loss of the market maker to $b \ln |X |$.
\item[The quadratic scoring rule (QMSR)] is another popular choice of proper scoring rule with the following cost function~\cite{CP07}
\[
    C_Q(\mathbf w) = \frac 1 n \sum\limits_{x \in X} \mathbf w(x) + \frac 1 {4b} \sum\limits_{x \in X} \mathbf w^2(x) - \frac 1 {4bn} \round{\sum\limits_{x \in X} \mathbf w(x)}^2 - \frac b n,
\]
where $b > 0$ is the liquidity parameter.

\end{description}

Hossain~\etal{}~\cite{HWY25} showed that designing efficient AMMs for the above two cost functions is equivalent to designing an efficient data-structure for the following two classical problems.

\begin{definition}[Multiplicative range query and range update]
 Given a set system $(X,\S)$ and a weight function $W : X \to \RR_+$, the  multiplicative range query and range update (multiplicative RQRU) problem asks for an algorithm that performs for any $ S \in \S$ and $ w \in \RR_+$ the following operations: 
\begin{itemize}
    \item $\mathrm{query}(S)$ :  compute and return $\sum\limits_{x \in S} W(x)$;
    \item $\mathrm{update}(S,w)$ : update $W$ with 
     $W(x) =  \begin{cases} w \cdot W(x) \text{ if  } x \in S \\ W(x) \text{ otherwise.}\end{cases}  $
 \end{itemize}
\end{definition}

\begin{definition}[Additive range query and range update]
 Given a set system $(X,\S)$, $l \in \NN$, and a weight function $W : X \to \RR^l$, the additive range query and range update (additive RQRU) problem asks for an algorithm that performs for any $ S \in \S$ and $ w \in \RR^l$ the following operations: 
\begin{itemize}
    \item $\mathrm{query}(S)$ : compute and  return $\sum\limits_{x \in S} W(x)  \in \RR^l$;
    \item $\mathrm{update}(S,w)$ : update $W$ with 
     $W(x) =  \begin{cases} w + W(x) \text{ if  } x \in S \\ W(x) \text{ otherwise.}\end{cases}  $
 \end{itemize}
\end{definition}

\subsection{Partition tree scheme}
The key element in the data-structure constructed by Hossain~\etal{}~\cite{HWY25} is partition trees with low visiting numbers.
\begin{definition}[Partition tree and visiting number]

Given a set system $(X,\S)$, a partition tree is a rooted tree $\T = (V,E)$ where each node $v \in V$ is associated with a canonical subset $N(v) \subseteq X$ and a list of its children $C(v) \subset V$ such that 
\begin{itemize}
    \item for any non-leaf node v, its children’s node-sets form a partition of its node-set;
    \item the node-set of the root $r$ is the base set, that is, $N(r) = X$;
    \item node-sets of leaves are singletons.
\end{itemize}
Given a canonical set $A \subseteq X$, we say that the range $S\in \S$ crosses $A$ if there exist $x,y \in A$ such that $|\{x,y\} \cap S | =1$. We also say that $S$ visits node $v$ if either $v$ is the root of $\T$ or $S$ crosses the node-set of $v$'s parent. The visiting number of the partition tree $\T$ is the maximum number of nodes visited by any single range in $\S$.
\end{definition}
The next theorem summarizes some of the main results of Hossain \etal{}~\cite{HWY25}.
\begin{theorem}[Theorems 3.2 and 4.1 in~\cite{HWY25}]\label{thm:Hossain-main}
Given a set system $(X,\S)$ and a partition tree $\T$ with visiting number $\nu$, there are algorithms that support
\begin{itemize}
\item the query and update operations of the multiplicative (or additive) RQRU problem on $(X,\S)$ with time complexity $O(\nu)$;
\item  price, cost, and buy operations for LMSR (or QMSR) with time complexity $O(\nu)$.
\end{itemize}
\end{theorem}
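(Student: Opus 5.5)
The proof implements both data structures as a lazy-propagation scheme over $\T$, in the style of a segment tree. The key point is that every operation on a range $S$ only touches the nodes that $S$ visits, and there are at most $\nu$ of them. The LMSR and QMSR operations are then reduced to a constant number of RQRU queries and updates, with slightly richer aggregates for QMSR.

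\textbf{Step 1: classifying nodes.} If $S$ does not cross $N(v)$, then either $N(v)\subseteq S$ or $N(v)\cap S=\emptyset$. Which case holds is decided by one membership test on a fixed representative $x_v\in N(v)$. Whether $S$ crosses $N(v)$ cannot be decided in $O(1)$ from the oracle alone. I therefore assume preprocessing: for each $S\in\S$, store the list of nodes it crosses (equivalently, the set of nodes it visits, with a flag for each). This cost belongs to preprocessing and is not part of the per-operation bound, matching the way the statement is used in \Cref{cor:RQRU}, where preprocessing is charged separately.

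\textbf{Step 2: multiplicative RQRU.} Each node $v$ stores the aggregate $\Sigma(v)=\sum_{x\in N(v)}W(x)$ and a lazy multiplier $\lambda(v)$, which is pending for its whole subtree.
\begin{itemize}
\item $\mathrm{query}(S)$ descends from the root. At a visited node $v$: if $N(v)\subseteq S$, return $\Sigma(v)$; if $N(v)\cap S=\emptyset$, return $0$; if $S$ crosses $N(v)$, push $\lambda(v)$ to the children (multiply their $\Sigma$ and $\lambda$, then reset $\lambda(v)=1$) and recurse on all children.
\item $\mathrm{update}(S,w)$ uses the same descent. At a contained node it multiplies $\Sigma(v)$ and $\lambda(v)$ by $w$. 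At a crossed node, after recursing, it recomputes $\Sigma(v)$ as the sum of its children's aggregates.
\end{itemize}
The nodes reached are exactly the root and the children of crossed nodes, which are precisely the nodes visited by $S$. Each visited node does $O(1)$ work on its own behalf, plus work for pushing to and summing over its children. That child work is charged to the children, which are themselves visited. Hence both operations take $O(\nu)$ time. Correctness is by induction: the invariant is that $\Sigma(v)$ equals the true sum over $N(v)$ once the pending multipliers of all ancestors are applied.

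\textbf{Step 3: additive RQRU.} The additive case is identical, with $\lambda(v)\in\RR^l$ an additive tag. Each node also stores $|N(v)|$. A contained node receives $\Sigma(v)\mathrel{+}=|N(v)|\,w$ and $\lambda(v)\mathrel{+}=w$.

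\textbf{Step 4: LMSR.} Maintain $W(x)=\exp(\mathbf w(x)/b)$ in the multiplicative structure, and let $Z=\Sigma(r)$ be the root aggregate.
\begin{itemize}
\item $\mathrm{price}(S)=\mathrm{query}(S)/Z$.
\item With $q=\mathrm{query}(S)$, we have $\mathrm{cost}(S,s)=b\ln\bigl((Z-q+q\,e^{s/b})/Z\bigr)$.
\item $\mathrm{buy}(S,s)=\mathrm{update}(S,e^{s/b})$.
\end{itemize}
Each is $O(1)$ RQRU operations, hence $O(\nu)$.

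\textbf{Step 5: QMSR.} This is the step I expect to be the main obstacle. The squared term is not additive per element, so a plain vector tag does not suffice. I would store per node the triple $(|N(v)|,\ \sum\mathbf w,\ \sum\mathbf w^2)$ with an additive tag $s$. Applying the tag updates the aggregates by
\[
\textstyle\sum\mathbf w^2 \mathrel{+}= 2s\sum\mathbf w+s^2|N(v)|, \qquad \textstyle\sum\mathbf w\mathrel{+}=s|N(v)|.
\]
These rules compose under successive tags, so the lazy scheme of Step 2 applies unchanged, again in $O(\nu)$ time. Then
\[
\mathrm{price}(S)=\frac{|S|}{n}+\frac{q_1}{2b}-\frac{|S|\sum_X\mathbf w}{2bn},
\]
where $q_1=\sum_{x\in S}\mathbf w(x)$ and $|S|$ are both returned by a query. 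The quantity $C(\mathbf w+s\mathbf 1_S)-C(\mathbf w)$ is a closed-form polynomial in $s$, $|S|$, $q_1$, and the global totals $\sum_X\mathbf w$ and $\sum_X\mathbf w^2$. Finally, $\mathrm{buy}$ is a range additive update. All of these take $O(\nu)$ time.
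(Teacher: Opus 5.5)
Your argument is correct. The paper contains no proof of this statement to compare against. It is quoted from Hossain, Wang and Yu~\cite{HWY25} and used purely as a black box, together with \Cref{lemma:parttree}, to obtain \Cref{cor:RQRU,cor:AMM}. Your lazy-propagation scheme is a sound, self-contained reconstruction of the natural argument behind the $O(\nu)$ bound:
\begin{itemize}
\item crossed nodes are closed under taking parents, so your descent touches exactly the root and the children of crossed nodes, i.e.\ the visited nodes;
\item the push/pull work at a crossed node is charged to its children, which are themselves visited;
\item your LMSR reductions are right: price $q/Z$, cost $b\ln((Z-q+qe^{s/b})/Z)$, and buy as multiplication by $e^{s/b}$.
\end{itemize}

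Your Step~1 preprocessing is genuinely necessary under the paper's constant-time membership-oracle model, and the quoted statement is silent on it. It is affordable where the theorem is actually used. For the $O(n)$-node tree of \Cref{lemma:parttree}, $n$ membership tests and one bottom-up pass per range classify every node as full, empty or crossed. That is $O(mn)$ in total, within the preprocessing budget of \Cref{cor:RQRU}.

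The obstacle you anticipate in Step~5, on the other hand, does not exist, because the quadratic terms cancel:
\[
C_Q(\mathbf w+s\mathbf 1_S)-C_Q(\mathbf w)=\frac{s|S|}{n}+\frac{2sq_1+s^2|S|}{4b}-\frac{2s|S|\sum_{X}\mathbf w+s^2|S|^2}{4bn}.
\]
So price, cost and buy only need $q_1$, $|S|$ and the global sum $\sum_X\mathbf w$. Plain additive RQRU with $l=2$, weights $W(x)=(1,\mathbf w(x))$ and update vector $(0,s)$ already supplies these, which is exactly why the statement pairs QMSR with additive RQRU. Your $\sum\mathbf w^2$ aggregate and its update rule are correct, just superfluous. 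Finally, the additive structure costs $O(l\nu)$ per operation, so the $O(\nu)$ bound presumes $l=O(1)$.
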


In fact, one can use low-crossing spanning paths to construct balanced partition trees of low visiting number.

\begin{lemma}[Chazelle and Welzl~\cite{CW89}]
Given a set system $(X,\S)$ and a spanning path $P$ of $X$ with crossing number $\kappa$, one can construct a balanced binary partition tree with visiting number at most $2\kappa \ceil{\log n} +1$ in time $O\round{n \log n}$.
\end{lemma}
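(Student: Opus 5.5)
The plan is to build the partition tree recursively from the path, halving it at each level, and to bound the visiting number level by level using the crossing number $\kappa$ of the path.

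\textbf{Construction.} Let $v_1,\dots,v_n$ be the order of $X$ along the spanning path $P$. The root gets $N(r)=X$. A node whose node-set is a contiguous interval $\{v_i,\dots,v_j\}$ of this order gets two children, namely the first $\lceil (j-i+1)/2\rceil$ and the remaining $\lfloor (j-i+1)/2\rfloor$ consecutive elements. We recurse until the node-sets are singletons. Every node-set is an interval of the path, the tree is balanced binary, and its depth is $\lceil \log n\rceil$. Each level is built in $O(n)$ time by splitting index ranges, so the whole construction takes $O(n\log n)$ time, or even $O(n)$.

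\textbf{Bounding the visits.} Fix $S\in\S$. The key observation is that if $S$ crosses an interval node-set $A=\{v_i,\dots,v_j\}$, then some consecutive pair $\{v_\ell,v_{\ell+1}\}$ with $i\le \ell<j$ is crossed by $S$. Indeed, $A$ contains one element in $S$ and one outside $S$, and walking along the path between them must change membership at some edge. That edge is an edge of $P$ whose two endpoints both lie in $A$.

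Within one level of the tree the node-sets are disjoint intervals, so each edge of $P$ lies inside at most one node-set of that level. Hence $S$ crosses at most $\kappa$ node-sets per level. A node is visited only if it is the root or its parent is crossed, and each crossed parent has exactly two children. So $S$ visits at most $2\kappa$ nodes at each level below the root.

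Summing over the $\lceil\log n\rceil$ levels below the root and adding $1$ for the root gives a visiting number of at most $2\kappa\lceil\log n\rceil+1$.

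\textbf{Main obstacle.} The only subtle point is the observation that crossing an interval forces a crossed path edge inside it, and that these witness edges are charged disjointly within a level. Everything else is bookkeeping. A minor care point is keeping the depth at $\lceil\log n\rceil$ when $n$ is not a power of two; this follows from splitting with ceiling and floor halves.
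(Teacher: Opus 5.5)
Your proof is correct. The paper does not prove this lemma itself; it only cites Chazelle and Welzl, and your argument is essentially the standard proof of that cited result: recursively halving the path order into interval node-sets of depth $\lceil \log n\rceil$, then charging each crossed interval to a crossed path edge lying inside it, which gives at most $\kappa$ crossed nodes per level because intervals on one level are disjoint, and hence at most $2\kappa\lceil\log n\rceil+1$ visited nodes.
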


Thus, \Cref{cor:spanpath} implies the following statement.

\begin{lemma}\label{lemma:parttree}
Given a set system $(X,\S)$ with dual-shatter function $\pi^*_{\S}(k)=O(k^d)$, one can construct a balanced binary partition tree with visiting number $\aO\round{n^{1-1/d}}$ in time $\aO\round{mn + n^2}$.
\end{lemma}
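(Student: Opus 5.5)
This follows by composing \Cref{cor:spanpath} with the Chazelle--Welzl lemma just stated. First, I apply \Cref{cor:spanpath} to $(X,\S)$. This gives a spanning path $P$ of $X$ whose crossing number with respect to $\S$ is $\kappa = \aO\round{n^{1-1/d}}$, in expected time $\aO(mn+n^2)$. Second, I feed $P$ to the lemma of Chazelle and Welzl. It produces a balanced binary partition tree with visiting number at most $2\kappa\ceil{\log n}+1 = \aO\round{n^{1-1/d}}$, since the extra $\log n$ factor is absorbed by $\aO$. This second step costs $O(n\log n)$, which is dominated by $\aO(n^2)$, so the total time is $\aO(mn+n^2)$.

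Two points need care. The first is that \Cref{cor:spanpath} bounds the \emph{expected} crossing number. To obtain a worst-case visiting number, I use Markov's inequality: with probability at least $1/2$ the path has crossing number at most $2\,\EE[\kappa]$. I can check the crossing number of a candidate path in $O(mn)$ time by scanning each range along the path using the membership oracle. If the check fails, I rerun. The expected number of repetitions is at most $2$, so the expected running time stays $\aO(mn+n^2)$.

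The second point is the time bound. As in \Cref{cor:spanpath} itself, the stated time should be read as expected time.

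The only real obstacle is the expected-versus-worst-case issue, and the Markov-plus-verification argument above handles it. Everything else is direct substitution into results stated earlier in the paper.
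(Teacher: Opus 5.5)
Your proposal is correct and matches the paper, which proves the lemma simply by feeding the spanning path of \Cref{cor:spanpath} into the Chazelle--Welzl lemma. Your Markov-plus-verification step is a useful addition, since the paper's crossing bound for the path only holds in expectation and the paper does not address this; the verification threshold should be an explicit upper bound $B$ on $\EE[\kappa]$ computed from the constants $c,d$ in the analysis rather than $\EE[\kappa]$ itself, and this works just as well because $\PP[\kappa > 2B] \le 1/2$.
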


\Cref{lemma:parttree} and \Cref{thm:Hossain-main} prove \Cref{cor:RQRU,cor:AMM}.

\end{document}